\documentclass[sigplan,10pt,nonacm]{acmart}

\usepackage{tikz}
\usepackage{amsmath}
\usepackage{amsthm}
\usepackage{graphicx}
\usepackage{amssymb}
\usepackage{booktabs}
\usepackage{pgfplots}
\pgfplotsset{compat=1.18}
\usepgfplotslibrary{fillbetween}
\usetikzlibrary{shapes.geometric, arrows.meta, positioning, calc, fit, backgrounds, shadows, shapes.symbols, matrix, decorations.pathreplacing}

\renewcommand\footnotetextcopyrightpermission[1]{}

\definecolor{amdgreen}{HTML}{D6EDD0}
\definecolor{tagcabi}{HTML}{FBD0C4}
\definecolor{tagimproc}{HTML}{DCC6EE}
\definecolor{tagcodecin}{HTML}{FAEAB0}
\definecolor{tagcodecfree}{HTML}{E2E2E2}
\newcommand{\mtag}[2]{\colorbox{#1}{\textbf{\vphantom{Adjy}#2}}}

\begin{document}
% (--) -- (--) -- (--) -- (--) -- (--) -- (--) -- (--) -- (--) -

\title{The World's Fastest Matching Engine Algorithm}

\author{Jake Yoon}
\email{jake@flash1.com}
\affiliation{
  \institution{Flash One Technologies LLC}
  \country{United States}
}

\begin{abstract}
We drove \textbf{247} matching engines through one C~ABI harness on one identical workload: every
open-source FIFO implementation we could find, deduplicated, and our own, on the same gate. The
workload doubles as a byte-identical correctness oracle, \textbf{1{,}000{,}000{,}000+ order
messages} per engine, replayed against an independent-engine consensus. Only \textbf{47} are
correct as shipped; we filed \textbf{181} GitHub issues upstream, \textbf{25} already fixed by
their maintainers, none declined. Our engine leads the \textbf{160} that reproduce the consensus by
\textbf{${\sim}25$~M/s} (\textbf{$4\times$} the second best) on worst-case throughput. One core
sustains \textbf{33.2~million order messages per second} (41.08~M/s, on AMD~EPYC processors) worst-case, and holds a sub-microsecond P99 host-path latency even at a 13~M~msgs/s load; a
96-core server (\textasciitilde\$1{,}630/month) sustains \textbf{${\sim}\,640$~million/s} across
10{,}000 symbols, for scale over \textbf{20$\times$} the U.S.\ consolidated quote feed's provisioned capacity.

The lead is structural: the \textbf{73} engines written inside the trading industry sit under the
same \textbf{8.19~M/s} ceiling as the rest of the field, and it is one of them, an IMC engineer's, that sets it. Every classical book, linked lists in a
balanced tree, pays a pointer chase on every operation and an $O(\log n)$ root-to-leaf search to open each new price level. We eliminate both:
the \emph{Priority-Indicated Node} (PIN) gives contiguous slots that resolve insertion in $O(1)$ from priority indicators; a neighbor-aware balanced tree turns the root-to-leaf search into an $O(1)$ splice or graft from the in-order neighbors electronic trading already supplies, then one rebalancing walk.
\end{abstract}

\maketitle

% (--) -- (--) -- (--) -- (--) -- (--) -- (--) -- (--) -- (--) -
\section{Introduction}
\label{sec:intro}

A single CPU core in our matching engine sustains \textbf{33.2~million order messages per second} (41.08~M/s, on AMD~EPYC processors) worst-case, and at a 13~M~msgs/s offered load (about a third of saturation) holds a sub-microsecond P99 end-to-end host-path response latency. On a public benchmark that drives \textbf{247} matching engines through identical work, it leads the \textbf{160} that reproduce a byte-identical correctness oracle by \textbf{${\sim}25$~M/s} (roughly $4\times$) on worst-case throughput, and outruns the fastest widely-used and professionally-authored builds, a 1{,}041-star project (CppTrader) and many other trading industry engineers' order books. Scaled out, one commodity 96-core server (\textasciitilde\$1{,}630/month) sustains \textbf{${\sim}\,640$~million} messages per second across 10{,}000 symbols: for scale, over \textbf{20$\times$} the provisioned capacity of the U.S.\ consolidated quote feed (a market-data workload, not order matching). This paper shows how.

\begin{figure*}[t]
    \centering
    \definecolor{deepred}{HTML}{A11E2C}
    \definecolor{barlight}{HTML}{C7CCD4}
    \definecolor{barhi}{HTML}{9AA0AC}
    \definecolor{pedcol}{HTML}{394150}
    \begin{tikzpicture}[font=\sffamily]
    \begin{axis}[
        width=\textwidth, height=6.9cm,
        axis y line=left, axis x line=none,
        y axis line style={black!30, line width=0.5pt},
        ylabel={Worst-case Throughput (M\,msgs/s)},
        label style={font=\small, text=black!60},
        xmin=-3, xmax=170, ymin=0, ymax=33.2,
        ytick={0,8.19,20,30,33.2}, yticklabels={0,8.2,20,30,33.2},
        tick style={black!30}, tick label style={font=\small, text=black!55},
        clip=false,
    ]
        %, the empty ~25 M/s band,
        \fill[deepred!6] (axis cs:-3,8.19) rectangle (axis cs:160.4,33.2);
        \draw[black!30, dashed, line width=0.5pt] (axis cs:-3,33.2) -- (axis cs:160.6,33.2);
        \draw[black!30, dashed, line width=0.5pt] (axis cs:-3,8.19) -- (axis cs:160.6,8.19);
        % field: 159 open-source conforming engines as a rising histogram (light); Flash One is the 160th, drawn separately
        \addplot[ybar, bar width=2.3pt, draw=none, fill=barlight] table[x=rank,y=tput] {field_worstcase.dat};
        \fill[barhi] (axis cs:51.55,0) rectangle (axis cs:52.45,0.18);
        \fill[barhi] (axis cs:83.55,0) rectangle (axis cs:84.45,0.86);
        \fill[barhi] (axis cs:90.55,0) rectangle (axis cs:91.45,1.03);
        \fill[barhi] (axis cs:113.55,0) rectangle (axis cs:114.45,1.76);
        \fill[barhi] (axis cs:122.55,0) rectangle (axis cs:123.45,2.15);
        \fill[barhi] (axis cs:143.55,0) rectangle (axis cs:144.45,5.36);
        \fill[barhi] (axis cs:151.55,0) rectangle (axis cs:152.45,6.825);
        \fill[barhi] (axis cs:158.55,0) rectangle (axis cs:159.45,8.19);
        % ~25 M/s performance gap (left)
        \draw[deepred, line width=0.8pt] (axis cs:5,8.19) -- (axis cs:11,8.19);
        \draw[deepred, line width=0.8pt] (axis cs:5,33.2) -- (axis cs:11,33.2);
        \draw[deepred, line width=1pt, {Stealth[length=5pt]}-{Stealth[length=5pt]}] (axis cs:8,8.5) -- (axis cs:8,32.9);
        \node[anchor=west, font=\large\bfseries, text=deepred] at (axis cs:11.5,22) {${\sim}25$\,M/s};
        \node[anchor=west, font=\footnotesize\bfseries, text=deepred] at (axis cs:11.5,19.7) {PERFORMANCE GAP};
        % the empty band label (vertically centered in the band: (8.19+33.2)/2)
        \node[font=\large\bfseries\itshape, text=black!60] at (axis cs:82,20.7) {No other engine reaches this region.};
        % Flash One, the red bar with a white star on top
        \fill[deepred] (axis cs:160.6,0) rectangle (axis cs:163.6,33.2);
        \node[star, star points=5, star point ratio=2.3, minimum size=12pt, inner sep=0pt, fill=white, draw=deepred, line width=1pt] at (axis cs:162.1,33.2) {};
        \node[anchor=south, font=\footnotesize\bfseries, text=deepred, align=center] at (axis cs:162.1,34.9) {FLASH ONE\\[0pt]{\large 33.2}\,M/s};
        %, pedigree tags BELOW the axis, thin leaders up to the highlighted bars,
        \node[align=center, font=\footnotesize, text=pedcol] (tSQP) at (axis cs:75,-2.9) {\textbf{Squarepoint Capital}\\[-1.5pt]quant};
        \draw[pedcol!55, line width=0.4pt] (axis cs:84,0.86) -- (tSQP.north); \fill[pedcol] (axis cs:84,0.86) circle (0.7pt);
        \node[align=center, font=\footnotesize, text=pedcol] (tJPM) at (axis cs:101,-6.3) {\textbf{JPMorgan}\\[-1.5pt]engineer};
        \draw[pedcol!55, line width=0.4pt] (axis cs:91,1.03) -- (tJPM.north); \fill[pedcol] (axis cs:91,1.03) circle (0.7pt);
        \node[align=center, font=\footnotesize, text=pedcol] (tCME) at (axis cs:115,-2.9) {\textbf{CME Group}\\[-1.5pt]engineer};
        \draw[pedcol!55, line width=0.4pt] (axis cs:114,1.76) -- (tCME.north); \fill[pedcol] (axis cs:114,1.76) circle (0.7pt);
        \node[align=center, font=\footnotesize, text=pedcol] (tIMC) at (axis cs:124,-6.3) {\textbf{IMC}\\[-1.5pt]engineer};
        \draw[pedcol!55, line width=0.4pt] (axis cs:123,2.15) -- (tIMC.north); \fill[pedcol] (axis cs:123,2.15) circle (0.7pt);
        \node[align=center, font=\footnotesize, text=pedcol] (tOPT) at (axis cs:133,-2.9) {\textbf{Optiver}\\[-1.5pt]engineer};
        \draw[pedcol!55, line width=0.4pt] (axis cs:144,5.36) -- (tOPT.north); \fill[pedcol] (axis cs:144,5.36) circle (0.7pt);
        \node[align=center, font=\footnotesize, text=pedcol] (tCIT) at (axis cs:151,-2.9) {\textbf{ex-Citadel}\\[-1.5pt]engineer};
        \draw[pedcol!55, line width=0.4pt] (axis cs:152,6.825) -- (tCIT.north); \fill[pedcol] (axis cs:152,6.825) circle (0.7pt);
        \node[align=center, font=\footnotesize, text=deepred] (tE820) at (axis cs:146,12.3) {\textbf{IMC} engineer\\[-1.5pt]\emph{sets the ceiling}};
        \draw[deepred!55, line width=0.4pt] (axis cs:159,8.19) -- (tE820.east); \fill[deepred] (axis cs:159,8.19) circle (0.9pt);
        % x-axis caption
        \node[anchor=north, font=\small\bfseries, text=black!75] at (axis cs:80,-8.7) {160+ PROVEN MATCHING ENGINES};
    \end{axis}
    \end{tikzpicture}
    \caption{\mtag{blue!12}{ARM Graviton4} \mtag{tagcabi}{C~ABI} \textbf{Worst-case throughput of the 159 conforming open-source matching engines}
    (byte-identical to the consensus), sorted low to
    high, against Flash One (the 160th conforming engine, the red bar) on the identical workload. The field's fastest reaches
    8.19~M/s; Flash One sustains 33.2~M/s, a ${\sim}25$~M/s gap ($4\times$) that no engine
    enters. The highlighted bars are personal projects by engineers who identify as working at
    leading trading firms (affiliations self-stated, not their employers' official engines); the fastest of all, which sets that ceiling, is itself one of them, an IMC engineer's. A margin this size is unreachable by any
    accumulation of implementation tweaks (in HFT a 5--10\% gain is already substantial~\cite{hrt2024devirt}) so it
    is the signature of a different \emph{architecture}, the PIN and neighbor-aware tree, not a
    faster implementation of the same one.}
    \label{fig:cliff}
\end{figure*}

Modern electronic venues routinely experience micro-bursts: short, extremely intense spikes in order flow that last only microseconds yet carry a significant fraction of daily volume~\cite{menkveld2018electron,db2016xetraInsights,db2025dynamics}. During these bursts, the per-symbol matching core saturates, queues build up, tail latency spikes, and deterministic behavior is lost~\cite{db2025dynamics,db2016xetraInsights}. Market makers react with defensive quoting, wider spreads, and reduced displayed liquidity, harming both traders and the exchange~\cite{aquilina2021armsrace}.

Existing high-performance exchanges are already close to the limits of conventional software architectures~\cite{teverovski2023t7latency}. Public documentation reports peak throughputs of only $\sim 300{,}000$ orders/s per partition (each bundling multiple products) on carefully engineered systems~\cite{eurex2024tech,db2025dynamics}. Because per-symbol matching logic is strictly serialized, Amdahl's law makes it the dominant bottleneck regardless of how aggressively surrounding infrastructure is parallelized or scaled out~\cite{amdahl67,db2016xetraInsights}.

This paper presents a new matching engine architecture that targets deterministic, micro-burst-resilient performance. The key observation is that most hot-path work consists of structured, cache-sensitive operations on queues of resting orders and price levels. We therefore organize the book as a hierarchy of fixed-capacity \emph{Priority-Indicated Nodes} (PINs) with contiguously addressable slots, bounded relocation cascades, and depth-aware node capacities, coupled with a neighbor-aware balanced search tree over price levels that supports constant-time splice/graft operations from known neighbors followed by a short rebalancing walk.

We implement this design on commodity CPUs and evaluate it on workloads calibrated to regulator-reported market activity, including synthetic micro-bursts of millions of back-to-back messages; the throughput and latency results summarized above are established in \S\ref{sec:comparison}. The same architecture is built from the ground up for hardware acceleration: the PIN's fixed-capacity slot regions map directly to on-chip block memory (BRAM), its priority indicators to hardware priority encoders, and neighbor-aware tree operations eliminate the deep data-dependent traversals hostile to hardware pipelines. A report for FPGA realization of this specified design is underway; this paper reports the CPU results.

In summary, this paper makes the following contributions:

\begin{itemize}
  \item We formulate the micro-burst matching problem for order books and show that realistic exchange workloads and public throughput ceilings place the hot matching loop at the architectural bottleneck~\cite{db2025dynamics,db2016xetraInsights}.
  \item We introduce the \emph{Priority-Indicated Node} (PIN), a new priority queue design with (i) a contiguously
addressable region of $C$ logical slots and (ii) priority indicators encoding the
entry's global priority status, that provide worst-case constant work per node while preserving the book's exact priority order.
  \item We develop a neighbor-aware balanced search tree framework for price levels that supports constant-time splice and graft operations from known neighbors, followed by a single root-to-leaf rebalancing path.
  \item We open-source the benchmark harness\footnote{\url{https://github.com/flash1-dev/matching-engine-benchmark}}, baseline-engine adapters, a deterministic workload generator, and byte-identical reference output hashes, so the workload, the baseline results, and the correctness oracle are independently reproducible, and any engine can be measured on the identical order stream. The same harness has since driven \textbf{247} matching engines through this workload; \textbf{160} reproduce the byte-identical consensus (no other engine reaching our worst-case throughput), only \textbf{47} are pristine as shipped, and we filed \textbf{181 GitHub issues upstream}, \textbf{25} already fixed by their maintainers, none declined (Figure~\ref{fig:waffle}; \S\ref{sec:audited_field}). Our own engine is reproducible under a separate license; its single-symbol numbers are reported here, validated against that public oracle, with the instance-level aggregate (\S\ref{sec:instance_throughput}) given as a ten-run median and per-run standard deviation.
\end{itemize}

\begin{figure*}[t]
    \centering
    \definecolor{wpristine}{HTML}{2E7D32}
    \definecolor{wfix}{HTML}{E8A33D}
    \definecolor{wbroken}{HTML}{9AA0A6}
    \definecolor{wxred}{HTML}{C62828}
    \pgfmathsetlengthmacro{\wunit}{\textwidth/28}%
    \begin{tikzpicture}[x=\wunit,y=\wunit]
      \foreach \i in {0,...,246}{%
        \pgfmathtruncatemacro{\col}{mod(\i,28)}%
        \pgfmathtruncatemacro{\row}{div(\i,28)}%
        \pgfmathtruncatemacro{\cat}{\i<47 ? 0 : (\i<160 ? 1 : 2)}%
        \edef\cc{\ifcase\cat wpristine\or wfix\or wbroken\fi}%
        \fill[\cc] (\col,-\row) rectangle ++(0.84,0.84);%
        \ifnum\cat=2%
          \draw[wxred, line width=1.1pt, line cap=round] (\col+0.14,-\row+0.14) -- (\col+0.70,-\row+0.70) (\col+0.14,-\row+0.70) -- (\col+0.70,-\row+0.14);%
        \fi%
      }
    \end{tikzpicture}\\[7pt]
    {\large
      $\vcenter{\hbox{\tikz\fill[wpristine](0,0)rectangle(2.2ex,2.2ex);}}$~\textbf{47} pristine as shipped\quad
      $\vcenter{\hbox{\tikz\fill[wfix](0,0)rectangle(2.2ex,2.2ex);}}$~\textbf{113} conform after our suggested fix\quad
      $\vcenter{\hbox{\tikz{\fill[wbroken](0,0)rectangle(2.2ex,2.2ex);\draw[wxred,line width=0.9pt,line cap=round](0.38ex,0.38ex)--(1.82ex,1.82ex) (0.38ex,1.82ex)--(1.82ex,0.38ex);}}}$~\textbf{87} non-conforming}
    \caption{\textbf{Conformance across the 247 surveyed engines.} Each is one cell, colored
    by how it fares against the byte-identical consensus: only \textbf{47} are pristine as
    shipped, \textbf{113} reach the consensus after our suggested correctness fix, a
    minimal, localized patch we drafted and filed upstream, never a codebase-wide rewrite, and
    \textbf{87} are non-conforming: we found no such patch for them, so reaching the consensus
    would take a structural change or a large edit to the codebase (they have diverged, crashed,
    or failed to finish their worst scenario, a small minority by deliberate design rather than defect).
    Fewer than one in five is correct out of the box. Across the 247, we filed 181 GitHub issues
    upstream; 25 are already fixed by their maintainers, and none declined.}
    \label{fig:waffle}
\end{figure*}

% (--) -- (--) -- (--) -- (--) -- (--) -- (--) -- (--) -- (--) -
\section{Background and Motivation}
\label{sec:background}
% (--) -- (--) -- (--) -- (--) -- (--) -- (--) -- (--) -- (--) -

Electronic limit order books (LOBs) implement continuous double auctions for individual
instruments. For each symbol, the exchange maintains a price-indexed set of resting
orders and a \emph{matching engine} that consumes a totally ordered message stream,
updates the in-memory book, and emits executions and acknowledgments.

Strict price--time priority is inherently \emph{serial} at the symbol level: competing
messages for the same instrument must be processed in a single deterministic sequence.
This is a correctness requirement, not an implementation choice, and it creates a hard
ceiling on per-symbol throughput regardless of how aggressively other pipeline stages
are parallelized~\cite{budish2015batchauctions,aquilina2021armsrace}. Modern exchange
architectures therefore scale \emph{across} symbols (many matching instances in parallel)
but remain throughput-limited \emph{within} a hot symbol by the performance and memory
behavior of a matching core.

\subsection{Market Micro-Bursts Expose Throughput Ceilings}
\label{subsec:microbursts}

Modern electronic markets exhibit highly bursty order-arrival patterns: long periods of
moderate activity punctuated by sub-millisecond \emph{micro-bursts} triggered by news or
auction transitions. A disproportionate share of trading occurs in these short intervals,
and they dominate tail latency and perceived market quality~\cite{aquilina2021armsrace}.
The dynamic is self-reinforcing: the faster a venue runs, the faster participants react and
the sharper the next burst, a feedback loop Deutsche B\"orse describes between its own
latency and customer reaction time~\cite{teverovski2023t7latency}, and one that ever-faster
hardware and AI-driven trading continue to tighten.

From a systems perspective, a micro-burst is the regime where the instantaneous ingress
rate for a single symbol exceeds the sustainable service rate of its matching engine.
Deutsche B\"orse's T7 measurements illustrate this gap: during a representative burst,
inbound traffic peaks around \(8\) million messages/second at an early gateway timestamp,
but only \(\sim 300{,}000\) messages/second at the start of matching, with intermediate
stages in the few-hundred-kHz range~\cite{db2025dynamics}. Public documentation similarly
reports sustainable per-partition matching throughput on the order of a few \(10^5\)
messages/second even in highly engineered systems~\cite{eurex2024tech}. Because the
matcher is the pipeline's narrowest stage: ingress and sequencing absorb far more than the
serial match loop can clear: a burst that exceeds this ceiling queues behind the matcher,
not upstream, and latency comes to be dominated by \emph{queuing delay}. What protects a
venue is therefore throughput \emph{headroom}: how large a burst it can clear before a queue
forms.

\subsection{Latency Spikes Create Execution Uncertainty and Widen Spreads}
\label{subsec:market_quality_chain}

Queueing during micro-bursts matters economically because it makes execution timing
\emph{unpredictable}. A market maker's cancel/replace message competes for the same
serialized matching bandwidth as every other message; when the book is congested, stale
quotes cannot be withdrawn quickly, and rational liquidity providers respond by quoting
more conservatively (less depth, wider spreads, more frequent quote fading) raising
transaction costs for investors. Under an order-competition rule, that congestion
compounds into a \emph{positional} disadvantage (\S\ref{subsec:commercial}): the venue
that falls behind does not just quote worse, it cedes order flow to the venue that does not.

\subsection{Why Throughput, Not Baseline Latency, Is the Remaining Bottleneck}
\label{subsec:throughput_vs_latency}

Over the last decade, venues and participants have largely exhausted low-hanging fruits in
\emph{baseline} latency through co-location, optimized networking stacks, and specialized
hardware. 
Deutsche B\"orse, for example, reports microsecond-scale baseline latencies in
T7 (a few microseconds one-way in colocated settings and low double-digit microseconds
from NIC to matching-engine ingress)~\cite{db2025dynamics}. At these scales, shaving a
few more microseconds off median latency does little to address the dominant source of
tail latency under stress: queues at the serialized match loop during bursts.

Empirical evidence also shows diminishing returns to latency reduction once fast access
is already available. A case study finds that a latency reduction at an Australian exchange
yielded some liquidity improvements but no persistent reduction in bid--ask spreads when
institutional traders already had co-location~\cite{murray2016latency}. By contrast,
exchange technology upgrades that explicitly increase \emph{capacity} have clearer
market-quality effects on both spreads and fairness; we quantify one such
upgrade (TSE's Arrowhead renewal) in detail in Section~\ref{subsec:commercial}.

\section{System Overview}
\label{sec:overview}

The matching core evaluated in this paper runs inside a conventional shard-per-core, shared-nothing exchange pipeline, a sequencer that linearizes arrivals into one total order per symbol, matcher shards, and an outbound publisher, each on a dedicated core and communicating only through bounded, lock-free queues so that no cross-core synchronization touches the critical path~\cite{ffwd-sosp17}. That surrounding pipeline is standard exchange infrastructure a deploying venue already operates or builds around the licensed core; this paper's contribution (and its measurements) isolate the \emph{matching core} it feeds.

\subsection{Order Book Representation Inside Each Matcher}

In most production implementations, the order book is represented as a queue of orders within a specific price level, and each price-level is organized in a linear data structure or a tree-like structure. In our implementation, the order book is represented as two tightly coupled layers: a chain of Priority-Indicated Nodes that stores individual orders in contiguous
memory, and a balanced search tree of price-level metadata that maps prices to the nodes and slots that
currently hold the best orders at each level. This organization ensures that the
latency-critical matching loop touches a small, cache-friendly working set. Each matcher shard owns the order book data structure for the symbols that are allocated to it.

\section{Data Structures and Core Operations}

\subsection{Key Limitations of Existing Work}
Most production and open-source matching engines still use pointer-chasing structures: each price level is a linked list (or tree of lists) of orders, with a balanced tree or array indexing price levels. This gives simple $O(1)$ queue edits and $O(\log N)$ price lookup, but it fights modern CPUs: heap-allocated nodes are scattered in memory, hardware prefetchers cannot predict pointer chains, and each list/tree step risks a dependent cache miss, so under micro-bursts the core stalls on memory rather than arithmetic. The obvious opposite extreme (storing a per-price queue as a single contiguous array) plays nicely with caches and prefetchers, but is mismatched to real workloads where roughly 97\% of orders cancel without executing~\cite{secTradeToOrderVolume2022,khomynPutnins2021} and many cancels hit in the middle of the FIFO. In an array, deleting from the middle requires shifting a suffix of the queue, giving $O(n)$ work per cancel at that level; under heavy churn, the engine burns cycles on large memmoves and loses the benefit of locality. In short, classic pointer-based designs underutilize the cache hierarchy, while naive flat arrays make random in-queue deletions prohibitively expensive; our design is motivated by the need to keep storage contiguous and prefetch-friendly without incurring $O(n)$ compaction on the dominant cancel path.

Cache-aware index structures such as
Masstree~\cite{masstree-eurosys12} and Silo~\cite{silo-sosp13} have
demonstrated that memory layout dominates algorithmic complexity at
high throughput, but they target point queries; the order book's
combination of ordered traversal, random-position deletion, and
dynamic ordered insertion (\S\ref{sec:background}) requires a
different structure that nonetheless applies the same locality
principles.

\subsection{Priority-Indicated Node (PIN)}

A \emph{Priority-Indicated Node} (PIN) is a fixed-capacity priority queue node with (i) a contiguously
addressable region of $C$ logical slots and (ii) priority indicators encoding the
entry's global priority status.

\paragraph{Contiguously addressable slot region.}
A node exposes a logical slot index space $\{0,\dots,C-1\}$ such that slot $i$ is found by
base-plus-stride arithmetic, with no pointer chasing. The region may be (i) a single
embedded array, (ii) a small number of back-to-back sub-arrays, or (iii) a contiguous
section of a shared arena owned by the node (with linear or ring indexing). Across all
cases we enforce a \emph{base/stride invariant}: during normal operations
(e.g., insert, delete, modify, and cascades), the base address (or arena offset) and
per-slot stride do not change, so consecutive indices always map to predictable
fixed-offset locations. Each slot stores either the order object or a compact reference to
an order stored elsewhere.

\paragraph{Priority indicators.}
A priority indicator encodes the \emph{global} priority status of an order
under the active rule. In a price--time book, for example, an indicator might record
that slot $i$ holds the book-wide best (or $k$-th best) order at its price
level; this is not merely a node-local ranking but a projection of the
order's position in the full priority sequence onto a slot-local
representation. In one exemplary embodiment, the node can maintain one
indicator per slot, updated whenever orders move so that each slot's
indicator stays consistent with the order it holds; more generally, a
priority indicator need not be materialized as one datum per slot. In a
sparse encoding, an indicator that is inactive or non-asserted under the
current rule (even for a slot that stores a live order) may be absent
from the encoding or set to a neutral value, avoiding overhead for
materializing indicators whose priority status is not currently relevant.

\paragraph{Representation and placement.}
Indicators are representation and placement neutral: they may be bitmasks (one bit per
slot), slot-indexed arrays of flags or references, or fields embedded in the order
objects; they may live inside the node, in a per-node external structure, or in the
orders. Empty slots are encoded by absence of an indicator entry or a neutral value
(e.g., bit=0/null). More generally, we use \emph{priority indicator} for any such
encoding; a one-indicator-per-slot layout is the realization evaluated here, not a
defining restriction.

\paragraph{Why these two properties matter.}
Base/stride addressing plus priority indicators decouples logical priority from physical
layout. Node-local operations are $O(C)$ (with small $C \ll n$ total orders), reduce slot
access to base-plus-stride arithmetic, and avoid the $O(n)$ data shifting and compaction
overhead of conventional array-based or unrolled-list designs, while retaining
cache-friendly contiguous storage.

\paragraph{Hardware suitability.}
The PIN's properties map cleanly to hardware. Contiguously addressable
slots fit FPGA block RAM (BRAM) or ASIC SRAM with single-cycle
deterministic access, no cache hierarchy, no eviction, no miss
penalty. Bitmask priority indicators become a combinational priority
encoder rather than a sequential \texttt{clz}/\texttt{tzcnt}, and the
bounded relocation cascade maps to a short pipelined datapath with a
statically known maximum depth. Because the tree is touched only on
price-level creation or deletion, a rare event relative to per-order
PIN operations: its logic need not be pipelined for throughput; a
state machine that stalls the matcher for a small number of cycles on
level transitions is sufficient. Statically sized memory, fixed-width
parallel operations, bounded pipeline depth, and rare-event tree
handling make the architecture a natural fit for FPGA and ASIC
embodiments; the deployment-level argument is in
\S\ref{sec:integration}.

\paragraph{Append and Prepend within a node.}
All insertions into a node are expressed as \emph{Append} or \emph{Prepend} relative to a
rule-defined reference at that price level (e.g., the current head or tail under strict
price--time). Append writes the new order into a chosen free slot and makes it the
lowest-priority order at that level; Prepend makes it the highest-priority order. Each
operation performs one payload write and a constant number of indicator updates.

\paragraph{Directed relocation cascades with bounded hops.}
If Append/Prepend targets a full node, the engine executes a directed relocation
\emph{cascade}. A \emph{Push Back} hop moves one selected order to the next node toward
the tail; a \emph{Push Forward} hop moves one order to the previous node toward the head.
Each hop relocates a single order payload and performs the corresponding bounded
indicator updates. Cascades are capped at $D_{\max}$ hops; if no free slot is found within
$D_{\max}$ hops, the engine allocates or reuses a node at the boundary, links it in, and
places the relocating order there. Thus insertion into a full node touches only a short
run of adjacent nodes and has worst-case cost proportional to $D_{\max}$.

\subsection{Flexible Node Capacity Model}
\label{subsec:flexcap}

Node capacity need not be uniform across the book. Orders near the top of book are
accessed far more often than those in the tail, so using the same width everywhere
wastes cache and TLB footprint on cold regions while under-amortizing misses on the
hot prefix. We therefore use a \emph{Flexible Node Capacity} policy that chooses each
node's slot capacity as a function of its depth from the book head. The policy is applied
only at node allocation or deallocation time; a node's capacity is fixed for its lifetime.

\paragraph{Depth-indexed capacity function.}
Let \(d \in \mathbb{N}\) denote node depth, with \(d=0\) at the best price level on a given
side (best bid/ask), increasing away from the top of book. We define a depth-dependent
capacity function
\[
\kappa(d) = C_d \in \mathbb{N}
\]
subject to:
\begin{enumerate}
  \item \textbf{Monotone nonincreasing:} \(C_0 \ge C_1 \ge C_2 \ge \dots\), so hotter
        regions may be wider.
  \item \textbf{Per-node bound:} there exists \(C_{\max}\) with \(C_d \le C_{\max}\) for all \(d\),
        chosen so that all per-slot indicators fit in a small, fixed number of machine words.
  \item \textbf{Unbounded total depth:}
        \(\sum_{d=0}^{\infty} C_d = \infty\), so the book can grow arbitrarily deep even
        if tail nodes use minimal capacities.
\end{enumerate}
These constraints preserve the bitmask/flag invariants of the Priority-Indicated Node and
ensure that changing \(\kappa(\cdot)\) never forces a global reorganization; only newly
allocated or recycled nodes adopt new capacities.

\paragraph{Per-node latency model.}
For a node of capacity \(k\) that stores orders in a contiguous array, in-node work (e.g.,
shifts) is \(O(k)\). We model the latency of a book operation that hits this node as
\[
T_{\mathrm{hit}}(k) = A k,\qquad
T_{\mathrm{miss}}(k) = A k + t_R,
\]
where \(A>0\) is the average per-slot scan/shift cost and \(t_R>0\) is the extra
penalty of missing in L1. Let \(P(k)\) be the probability the node is resident in L1
when an operation touches one of its orders. The expected latency is
\[
L(k) = P(k)\,T_{\mathrm{hit}}(k) + [1 - P(k)]\,T_{\mathrm{miss}}(k)
      = A k + [1 - P(k)]\,t_R .
\]
The \(k\)-dependent part of the objective is
\[
\Delta(k) = A k - t_R P(k),
\]
so minimizing \(L(k)\) is equivalent to minimizing \(\Delta(k)\).

\paragraph{Empirical access model.}
Index price levels by \(\ell \in \{1,2,\dots\}\), with \(\ell=1\) at the best price. Empirical
studies of limit order books report that order-flow intensity decays with distance from
the best quotes with heavy tails and that average depth profiles decay roughly
exponentially.\cite{gould2013lob,zovkofarmer2002patience,bouchaud2002orderbook,cont2010stochastic,munitoke2013queueing}
We encode this via:
\begin{enumerate}
  \item Updates per price level follow approximately a power law:
        \[
        \#\text{updates}(\ell) \propto \ell^{-\beta}, \qquad \beta > 1.
        \]\cite{zovkofarmer2002patience,bouchaud2002orderbook,gould2013lob,cont2010stochastic}
  \item Expected queue length at level \(\ell\) decays roughly exponentially:
        \[
        n_\ell = n_1 e^{-\gamma (\ell - 1)}, \qquad \gamma > 0.
        \]\cite{bouchaud2002orderbook,munitoke2013queueing,gould2013lob}
\end{enumerate}
Within a level we assume uniform hits across positions in the FIFO queue. Conditional
on a hit at level \(\ell\), each of the \(n_\ell\) orders is equally likely:
\[
\Pr(\text{offset } j \mid \ell) = \frac{1}{n_\ell}, \quad 0 \le j < n_\ell.
\]
Writing \(Z_\beta = \sum_{m=1}^{\infty} m^{-\beta}\) for the normalizing constant, the
probability that a random book operation hits the order at level \(\ell\) and offset \(j\) is
\[
p_{\ell,j} = \frac{\ell^{-\beta}}{Z_\beta n_\ell}.
\]
Because this does not depend on \(j\), we write \(p_\ell\) for the per-order hit
probability at level \(\ell\).

\paragraph{Node hit probability and capacity choice.}
Group orders into nodes that each hold \(k\) consecutive orders in a global ranking
(e.g., scan price levels by increasing \(\ell\), and within each level scan FIFO offsets).
Let \(s\) be the rank of the first order in a node, and \(p_i\) the hit probability of the
order with rank \(i\). For nodes far from the head and \(k \ll s\), \(p_i\) varies slowly
and we approximate
\[
p_i \approx p_s \quad \text{for } i \in \{s,\dots,s+k-1\}.
\]
The node hit probability is then
\[
P(k) = 1 - \prod_{i=s}^{s+k-1} (1 - p_i) \approx 1 - (1 - p_s)^k.
\]

\paragraph{Deep nodes.}
If \(k p_s \ll 1\), then \((1 - p_s)^k \approx 1 - k p_s\) and
\[
P(k) \approx k p_s.
\]
Substituting into \(\Delta(k)\) gives
\[
\Delta(k) \approx k \bigl(A - t_R p_s\bigr).
\]
Whenever \(p_s < A/t_R\), \(\Delta(k)\) grows with \(k\), so the optimal choice is the
smallest feasible node size. This justifies thin nodes in the tail.

\paragraph{Top-of-book nodes.}
Near the head of book, per-order hit probabilities are large enough that the
linearization $(1-p_s)^k \approx 1 - kp_s$ used for deep nodes no longer
holds. At the best price level ($\ell=1$), the per-order hit probability
from the empirical model above is
\[
p_1 = \frac{1^{-\beta}}{Z_\beta \, n_1} = \frac{1}{Z_\beta \, n_1},
\]
where $Z_\beta = \sum_{m=1}^{\infty} m^{-\beta}$ is the normalizing
constant of the power-law level-hit distribution and $n_1$ is the queue
length at the best price. More generally, for a node sitting at level
$\ell$ near the top of book, we approximate $p_i$ over the node by the
constant
\[
C_{\mathrm{top}} \;=\; \frac{\ell^{-\beta}}{Z_\beta \, n_\ell}.
\]
At $\ell=1$ this reduces to $C_{\mathrm{top}} = 1/(Z_\beta \, n_1)$.
Because $C_{\mathrm{top}}$ is small but $k C_{\mathrm{top}}$ is no longer
negligible, we use the exponential approximation
\[
(1 - C_{\mathrm{top}})^k \;\approx\; \exp(-k\,C_{\mathrm{top}})
\]
to obtain the node hit probability
\[
P(k) = 1 - (1 - C_{\mathrm{top}})^k \;\approx\; 1 - \exp(-k\,C_{\mathrm{top}}).
\]
Substituting into the $k$-dependent objective $\Delta(k) = Ak - t_R\,P(k)$
from the per-node latency model gives
\[
\Delta(k) = A\,k - t_R\bigl[1 - \exp(-k\,C_{\mathrm{top}})\bigr].
\]
Differentiating with respect to $k$ and setting to zero:
\[
\frac{d\Delta}{dk} = A - t_R\,C_{\mathrm{top}}\,\exp(-k\,C_{\mathrm{top}}) = 0.
\]
Taking logarithms and solving for $k$ yields the optimal node capacity
\[
k^{\ast} = \frac{1}{C_{\mathrm{top}}} \ln\!\left(\frac{t_R\,C_{\mathrm{top}}}{A}\right).
\]

This is well-defined whenever $t_R\,C_{\mathrm{top}} > A$, i.e., when the
cache-miss penalty weighted by the per-order hit rate exceeds the per-slot
scan cost, precisely the regime at the top of book where orders are hot
enough to justify wider nodes. The formula has a natural interpretation:
$k^*$ grows \emph{logarithmically} with the ratio $t_R/A$. A large L1
cache-miss penalty $t_R$ pushes the optimal capacity upward, because
packing more orders into a single node amortizes the cost of one cache load
across more slots; conversely, a large per-slot scan cost $A$ favors
smaller nodes to curtail the linear work done on every modification. The
prefactor $1/C_{\mathrm{top}}$ scales inversely with hit probability, so
hotter levels (higher $C_{\mathrm{top}}$) tolerate wider nodes while cold
levels shrink toward the minimum.

As depth increases and $C_{\mathrm{top}}$ drops below $A/t_R$, the
logarithm goes negative and the optimal choice reverts to the smallest
feasible node size, recovering the deep-node result. A practical caveat
applies at the \emph{very} top of book: the monotone exponential queue-length
decay $n_\ell \approx n_1 e^{-\gamma(\ell-1)}$ holds in the
interior of the book, but in many liquid equities the resting depth peaks
a tick or two behind the touch, a ``hump'' over the first one-to-five
ticks, leaving the very best level thinner than a monotone fit through
that region implies. Because the model then over-states the resting queue
at the top of book and $C_{\mathrm{top}} \propto 1/n_\ell$, it
\emph{under}-states the per-order hit probability there, so the analytic
$k^*$ at the very top of book is best read as an approximate guide rather
than an exact optimum; in
production an online estimator (not detailed here) can tighten it from
live telemetry.

\subsection{Price-Level Index with Neighbor-Aware Balanced-Tree Updates}
\label{subsec:price_index}

Each side of the order book maintains a dynamically changing set of \emph{active price
levels}; a balanced search tree over prices supports fast best-price queries and
predecessor/successor navigation needed by continuous matching.  Each tree element is a
\emph{Price Level Descriptor} that stores fixed-size metadata (price, aggregated size,
and references to the current head/tail order locations in Priority-Indicated Nodes), along
with tree links and explicit in-order neighbor links.

\paragraph{Why neighbor-aware updates.}
In conventional balanced trees, inserting or deleting a price-level key costs
$O(\log n)$ for a root-to-leaf \emph{search} (to find the structural edit location),
plus $O(\log n)$ for the subsequent \emph{fix-up} walk (rotations/splits/merges) that
restores balance~\cite{clrs2022,guibas1978dichromatic,bayer1972btree}.  The key idea of
\emph{neighbor-aware} insertion/deletion is to bypass the search phase when the
matching engine already knows the in-order predecessor and/or successor of the affected
price level.  Given these neighbors, the tree edit reduces to a constant-time splice
(or graft) localized to a small set of nodes/pages, followed by the standard fix-up
phase along a single ancestor path.  Eliminating the search traversal removes a large
fraction of pointer-chasing reads and unpredictable branches from the critical path,
which is precisely the behavior that becomes fragile under micro-bursts.

\paragraph{How the matching engine supplies neighbors.}
The matching engine naturally maintains (or can obtain at negligible marginal cost)
the neighbor information required for these updates:

\begin{itemize}
  \item \textbf{When activating a new price level.}
        A new level $p$ is created only when an incoming limit order targets a price
        with no existing resting interest.  At that moment, the engine can determine
        the immediate predecessor/successor levels in price order from \emph{state it
        already touches}:
        (i) best-price pointers and the in-order neighbor links of nearby active
        levels (common when $p$ is close to the top of book), or
        (ii) a single predecessor/successor query (e.g., ``floor'' or ``ceiling'') that
        is needed anyway to decide where $p$ sits relative to current book state.
        Crucially, once the engine has identified the bracketing levels $(P,S)$, it
        does \emph{not} re-traverse the tree from the root to locate the insertion
        point; it splices the new descriptor directly between $P$ and $S$.

  \item \textbf{When deleting an empty price level.}
        A level is deleted exactly when the last resting order at that price is
        executed or canceled.  Because each order slot ultimately maps to its owning
        level descriptor (directly or indirectly via Priority-Indicated Node metadata), the
        engine holds a pointer to the descriptor being removed.  The descriptor
        itself maintains explicit in-order neighbor links (\texttt{pred}/\texttt{succ}),
        so the engine can select the successor (or predecessor at the boundary) as a
        graft candidate without any tree search.
\end{itemize}

As a result, \emph{tree search is not used to discover neighbors}; neighbor discovery
is coupled to the matching workflow and book-local metadata, and the balanced tree is
used primarily for (i) maintaining global price order under churn and (ii) providing a
bounded-height structure for predictable rebalancing.

\paragraph{Binary-tree procedure (AVL / Red--Black).}
Let $n$ be the number of active price levels on a side.  Suppose a new level $p$ must
be inserted between its immediate neighbors $P < p < S$ in price order (both may exist,
or one may be missing at the extremes).  In a binary search tree, the insertion location is a \emph{unique gap} between $P$ and $S$: if both neighbors exist, exactly one of $\mathrm{right}(P)$ or $\mathrm{left}(S)$ is null, and that null pointer is the unique attachment location preserving the BST invariant.

Neighbor-aware insertion therefore performs:
(i) allocate a descriptor for $p$,
(ii) attach it using the unique null child pointer determined from $(P,S)$ with $O(1)$
pointer writes,
(iii) update the doubly-linked neighbor pointers of $P$, $S$, and $p$ in $O(1)$ writes,
and then
(iv) run the standard AVL/RB fix-up along the ancestor path to the root, which is
$O(\log n)$ and uses local rotations/recoloring~\cite{clrs2022,guibas1978dichromatic}.

Deletion is symmetric.  When removing a descriptor $z$, the engine chooses $y$ as
$z$'s in-order successor (or predecessor at the boundary) directly from neighbor links.
It then performs a constant-size graft/transplant that replaces $z$ with $y$ while
preserving in-order traversal, updates neighbor links in $O(1)$ writes, and executes the
standard fix-up walk along the single path where balance may have been perturbed.
Overall, for both insertion and deletion, the cost becomes:
\[
T(n) \;=\; O(1)\;+\;O(\log n),
\]
where the $O(1)$ term replaces the traditional $O(\log n)$ search.

\paragraph{Multi-way procedure (\(B\)/\(B^+\)-trees).}
In multi-way trees, the same neighbor information identifies a unique \emph{gap} in the
sorted key sequence at the leaf layer.  Operationally, the engine keeps (or quickly
derives) a pointer to the leaf/page containing $P$ or $S$ (e.g., via the descriptor),
inserts the new key $p$ into that leaf at the appropriate slot, updates leaf-level
neighbor links, and then performs the standard split/merge/borrow fix-ups on the single
ancestor path to the root~\cite{bayer1972btree}.  As in the binary case, neighbor-aware
updates remove the top-down search and reduce the structural edit to a constant-time
localized page operation plus the usual bounded-height rebalancing.

\begin{theorem}[Neighbor-aware insertion/deletion]
Let $T$ be a balanced search tree from any family whose rebalancing procedure uses only order-preserving local transforms (rotations, splits, merges, redistribution) along a single root-to-leaf path. Given a new key $p$ and its in-order neighbors in $T$ (predecessor $P$, successor $S$, or both; at the extremes, one suffices), insertion of $p$ requires $O(1)$ reference writes to attach $p$ at the unique BST-valid position, followed by the standard $O(\log n)$ rebalancing walk. Deletion is symmetric: given the node $z$ to remove and its in-order successor $y$ (obtained from an explicit neighbor link in $O(1)$), the graft/transplant requires $O(1)$ reference writes followed by the standard rebalancing walk.
\end{theorem}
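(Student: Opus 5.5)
The plan is to split the argument into a structural part and a cost part, and to treat binary and multi-way trees separately only where the attachment step differs. The structural part is the classical in-order gap lemma. In a binary search tree, if $P$ and $S$ are consecutive keys in in-order with $P<p<S$, then exactly one of $\mathrm{right}(P)$ and $\mathrm{left}(S)$ is null, and that null pointer is the unique leaf position where $p$ can be placed without violating the BST invariant. I would prove this by cases on the ancestor relation between $P$ and $S$:
\begin{itemize}
\item Because they are in-order adjacent, one must be an ancestor of the other.
\item If $S$ is an ancestor of $P$, then $P$ is the maximum of $\mathrm{left}(S)$. Hence $\mathrm{left}(S)\neq$ null and $\mathrm{right}(P)=$ null.
\item If $P$ is an ancestor of $S$, the symmetric argument applies.
\end{itemize}
Uniqueness follows because a standard BST search for $p$ must terminate at a null pointer lying in the open interval $(P,S)$, and the only such null pointer is the one identified. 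At the extremes only one neighbor exists, say $S$ is the minimum. Then $\mathrm{left}(S)$ is null, and the same reasoning applies.

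With the attachment location fixed, insertion performs a bounded number of reference writes:
\begin{itemize}
\item the child pointer and the parent pointer;
\item the $\texttt{pred}$/$\texttt{succ}$ links of $P$, $p$ and $S$;
\item initialization of the balance field.
\end{itemize}
This count is independent of $n$. The tree after attachment is exactly the tree the textbook insertion would produce after its search phase. So the standard fix-up (AVL height updates and rotations, or RB recoloring and rotations) applies verbatim and costs $O(\log n)$ along the ancestor path of $p$. By hypothesis the rebalancing transforms are order-preserving, so the in-order neighbor links stay valid without further writes.

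For deletion I would follow the CLRS transplant structure. If $z$ has at most one child, splice it out with $O(1)$ writes. Otherwise $y=\mathrm{succ}(z)$ is read directly from the neighbor link, and $y$ has no left child. The case split is the usual one:
\begin{itemize}
\item $y$ is the right child of $z$;
\item $y$ lies deeper in the tree, in which case $y$ is first transplanted out by replacing it with $\mathrm{right}(y)$.
\end{itemize}
In either case $y$ then takes $z$'s position, for a constant number of pointer writes. Finally $\mathrm{pred}(z)$ and $\mathrm{succ}(z)$ are linked to each other. The standard fix-up then runs from the lowest perturbed node, the old parent of $y$ (or of $z$), up to the root, which is $O(\log n)$.

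For $B$/$B^+$-trees the argument is analogous. The descriptor gives a leaf pointer for $P$ or $S$, and because keys within a leaf are consecutive, the gap lies in that leaf. When $P$ and $S$ are in different leaves, either boundary slot is valid. Inserting there is a constant-size page edit for fixed branching factor. Split, merge and borrow operations then propagate along the single leaf-to-root path.

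The step I expect to be the main obstacle is justifying that the search phase is genuinely unnecessary for the fix-up. Standard fix-up routines often implicitly use the search path, stored on a stack, to walk upward. I would handle this by assuming parent pointers, which are part of the "neighbor-aware" representation, so that the ancestor walk is recovered in $O(1)$ per level. The walk is bounded by the tree height, which is $O(\log n)$ by the balance invariant of the family.
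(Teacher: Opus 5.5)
Your proof follows the paper's. The paper makes the same two moves:
\begin{itemize}
\item the in-order gap lemma, which places $p$ at the unique null pointer among $\mathrm{right}(P)$ and $\mathrm{left}(S)$;
\item the observation that the tree after attachment is exactly what search-based insertion would produce, so the fix-up runs unchanged.
\end{itemize}
You split cases on which of $P,S$ is the ancestor, while the paper splits on whether $\mathrm{right}(P)$ is null; the two case analyses are equivalent. Three of your details fill in things the paper's sketch leaves implicit: the CLRS-style transplant for deletion, the remark that order-preserving rotations leave the \texttt{pred}/\texttt{succ} links valid, and the use of parent pointers to replace the search-path stack.

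One sentence in your multi-way case is wrong. In a $B^+$-tree, when $P$ and $S$ lie in different leaves, it is not true that either boundary slot is valid. The separator key between those two leaves sits in their lowest common ancestor and sends $p$ to exactly one side. Putting $p$ on the other side breaks the search invariant unless that separator is rewritten. The slot is therefore unique, as the paper states. However, finding it, or patching the separator, may require climbing to that ancestor. That cost fits inside the $O(\log n)$ term, not the $O(1)$ attachment step. This gap is also glossed over in the paper's own ``identifiable in $O(1)$'' claim for the multi-way case.
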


\begin{proof}[Proof sketch]
\emph{Existence and uniqueness of the attachment point.} In any binary search tree, if both $P$ and $S$ are present and adjacent in in-order traversal, then exactly one of $\mathrm{right}(P)$ or $\mathrm{left}(S)$ is null: if $\mathrm{right}(P)$ is non-null then $S$ is the leftmost descendant of $\mathrm{right}(P)$ so $\mathrm{left}(S)$ is null; if $\mathrm{right}(P)$ is null then $S$ is an ancestor with $P$ in its left subtree so $\mathrm{left}(S)$ is non-null. This null pointer is the unique location where $p$ can be attached as a leaf while preserving the BST invariant; linking $p$ there requires $O(1)$ writes. At the boundary (no predecessor or no successor), the attachment point is the leftmost or rightmost null pointer in $T$, identifiable in $O(1)$ from the existing extreme. For multi-way trees ($B$/$B^+$), the analogous gap is a unique slot position in the leaf containing $P$ or $S$, identifiable in $O(1)$ given a pointer to that leaf.

\emph{Rebalancing is unaffected.} The standard fix-up procedures (AVL rotations, red-black recoloring, $B$-tree splits/merges) are functions solely of the ancestor path from the physically modified position to the root and the local balance metadata (heights, colors, key counts) along that path. These procedures are identical regardless of whether the modification point was found by root-to-leaf search or by neighbor-based $O(1)$ lookup: the tree state after attachment is the same in both cases. Trees requiring occasional global rebuilds (e.g., scapegoat trees~\cite{galperin1993scapegoat}) are excluded, as their rebalancing is not a local-transform walk.
\end{proof}

\paragraph{When neighbors are unavailable.}
The neighbor-aware path is an optimization layered on a conventional
balanced tree, not a replacement for it. When the caller does not already
hold an in-order neighbor, the index falls back to a standard root-to-leaf descent and
retains the usual $O(\log n)$ bound. Neighbor-awareness is therefore a
strict improvement rather than a trade-off: the common case, in which
mutations cluster near recently touched keys, pays only $O(1)$ to locate
the attachment point, while the rare case degrades gracefully to the
textbook cost.

% (--) -- (--) -- (--) -- (--) -- (--) -- (--) -- (--) -- (--) -
\section{Implementation}
\label{sec:implementation}
% (--) -- (--) -- (--) -- (--) -- (--) -- (--) -- (--) -- (--) -

\subsection{Platform and Build}
Two AWS EC2 bare-metal instances are used, each single-socket with a single NUMA node. So that every engine is measured on identical hardware, the head-to-head comparison (\S\ref{sec:audited_field}) and the instance-level scale-out (\S\ref{sec:instance_throughput}) run on an \texttt{r8g.metal-24xl} (Graviton4): ARM64 Neoverse-V2, 96 cores (no SMT), 754\,GB DRAM, 64\,B cache lines, Ubuntu~24.04 (Linux~6.14.0-1017-aws), an AWS ENA NIC, listed at approximately \$1{,}630/month under three-year all-upfront reserved pricing (AWS US~East, May~2026). Our engine's single-core, per-symbol, latency, and burst characterization (Tables~\ref{tab:symbol_scaling}--\ref{tab:burst_latency}) runs on a newer \texttt{r8a.metal-24xl} (AMD~EPYC~9R45, an AWS-custom Zen~5 ``Turin'' part): 96 cores / 96 threads (no SMT), 48\,KB L1d and 1\,MB L2 per core, 32\,MB L3 per CCX across 12 CCXs (384\,MB total), 4.51\,GHz max boost, 752\,GiB DRAM, Ubuntu (Linux~7.0.0-1009-aws).

The engine is a single C++ process with aggressive optimizations, built with GCC~14.2.0 (\texttt{aarch64-linux-gnu}) on Graviton4 and g++~15.2.0 (x86-64) on EPYC. Because the head-to-head field comparison runs on identical Graviton4 hardware, the measured margin over the field reflects data-structure locality and L1/L2 load-to-use latency, not clock speed. The design ports cleanly across instruction sets: the ARM Graviton4 and the x86 Zen~5 EPYC are both single-socket, NUMA-free hosts with a large private per-core L2, and the higher-clocked EPYC part runs a single core faster still.

\paragraph{Reading the measurement tags.} Every figure, table, and results subsection is tagged with the machine it ran on and with how the engine was driven. \mtag{tagcabi}{C~ABI}: driven through the harness's C~ABI, the shared boundary every engine crosses, which costs per-message struct marshaling and call dispatch; every cross-engine comparison is necessarily of this kind. \mtag{tagimproc}{In-process}: linked and called natively, with no ABI layer and therefore no marshaling or dispatch, which is why in-process single-symbol throughput sits slightly above the harness figure. Latency experiments carry a second tag naming the span measured: \mtag{tagcodecin}{codec-inclusive} covers the full host path, inbound OUCH parsing through OUCH/ITCH encoding, while \mtag{tagcodecfree}{codec-free} measures struct-in to struct-out with the wire codec excluded.

\subsection{Runtime}\label{sec:runtime}
The order book is sharded by one or more symbols and owned by a single matching thread (no locks in the book). Ingress, matching, and egress stages communicate via bounded queues. For benchmarking, all threads are pinned to dedicated cores on both machines, and every AMD measurement holds its cores at least device-interrupt-free, because sub-microsecond tails are interrupt-sensitive: \texttt{irqbalance} is disabled and every steerable IRQ, plus the kernel's default IRQ mask, is pinned to a reserved housekeeping pair, enforced by a pre-run gate and confirmed by zero device-interrupt deltas in \texttt{/proc/interrupts}. Every AMD single-core characterization (Tables~\ref{tab:symbol_scaling}--\ref{tab:burst_latency} and the C~ABI harness worst-case) additionally runs its five measurement cores (48--51 and 56; housekeeping on 94--95) fully isolated: \texttt{isolcpus=domain,managed\_irq}, \texttt{nohz\_full}, and \texttt{rcu\_nocbs}, watchdogs off, \texttt{skew\_tick=1}, unbound workqueues masked away, and the engine's memory in hugepage-backed pools (\texttt{MAP\_HUGETLB}) pre-touched at initialization. Verified, not assumed: over a 30~s window the isolated cores record zero local-timer ticks, zero rescheduling or function-call IPIs, and zero TLB shootdowns; the per-CPU kernel threads that necessarily remain resident are quiescent by the same counters. SMIs are outside our control on cloud hardware and invisible to interrupt counters, so we measure rather than assume them: over a 300~s idle window on the matcher core, the \texttt{osnoise} tracer records 19~ns of cumulative noise, zero NMIs, and six SMI detections at its 1~ns granularity floor, bounding the firmware residual at effectively zero over the sampled window; a run under load shows the same firmware silence. Batteries are serialized; a run aborts if another benchmark binary is live or any process exceeds 25\% CPU. One reproducibility trap: AWS cloud images assign \texttt{GRUB\_CMDLINE\_LINUX\_DEFAULT} in \texttt{/etc/default/grub.d/50-cloudimg-settings.cfg}, silently discarding edits to \texttt{/etc/default/grub}; isolation parameters must go through a higher-priority drop-in and be verified in \texttt{/proc/cmdline}. Isolation tightened exactly what it plausibly could: plateau P99s improved 2--9\%, the sub-microsecond P99 frontier moved from 12 to 13~M~msgs/s, worst-case C~ABI throughput rose about 2\% (40.22 to 41.08~M~msgs/s, all ten trials above 40.4), and the in-process symbol ladder sits 1--3\% higher at every count, while the flat-out drain ceiling did not move, being bandwidth- and service-rate-bound rather than jitter-bound.

\section{Evaluation}

\paragraph{On ``the world's fastest.''}
The title is a claim we invite the reader to challenge, not a slogan to take
on faith. We make it precise against the best \emph{available} evidence:
the fastest matching throughput on identical hardware, benchmarked against
baselines anyone can openly reproduce. Published production-venue figures are either far lower (\S\ref{sec:background}) or not
directly comparable, they bundle full networking, risk, compliance, and
real-network stages this measurement deliberately excludes. Against the public benchmark's audited field, 247
engines driven through identical work on the same hardware, ours among them: a single
matching core sustains \textbf{33.2~million order messages per second} and leads the
\textbf{160} that reproduce the byte-identical oracle by \textbf{${\sim}25$~M/s} (roughly $4\times$)
on worst-case throughput (\S\ref{sec:comparison}). We therefore issue an open-harness
challenge: the workload generator, baseline adapters, and byte-identical
reference hashes are public, so the baselines and the correctness oracle
are independently reproducible, and any venue or firm can run the identical
workload against its own engine on identical hardware and compare.\footnote{\url{https://github.com/flash1-dev/matching-engine-benchmark}}
The production binary itself embodies patented internals we do not open-source; to let a prospective licensee or an independent auditor verify our own numbers rather than take them on faith, we will demonstrate it running the identical public workload on identical hardware under a mutual non-disclosure and non-reverse-engineering agreement. We cordially invite the community to challenge the title, but without infringing our patent portfolio.

\begin{figure}[t]
    \centering
    \definecolor{funslate}{HTML}{5B6472}
    \definecolor{funblue}{HTML}{2C6E9B}
    \definecolor{fungreen}{HTML}{2E7D32}
    \begin{tikzpicture}[font=\sffamily,
        tier/.style={rounded corners=2pt, text=white, align=center, inner sep=4pt, minimum height=8.5mm},
        down/.style={-{Stealth[length=5pt,width=5pt]}, black!35, line width=1pt}]
      \node[tier, fill=funslate, minimum width=7.6cm] (t1) at (0,0)
        {\textbf{247} matching engines\\[1pt]{\scriptsize\mdseries driven through one identical C~ABI}};
      \node[align=center] (bn) at (0,-1.65)
        {{\Large\bfseries 1{,}000{,}000{,}000+}\\[2pt]{\footnotesize order messages replayed \emph{per engine}}\\[-1pt]{\scriptsize 100 random seeds $\times$ 5 volatility regimes, byte-for-byte}};
      \node[tier, fill=funblue, minimum width=5.2cm] (t2) at (0,-3.25)
        {\textbf{160} reproduce the consensus\\[1pt]{\scriptsize\mdseries every message, on two independent signals}};
      \node[align=center, font=\scriptsize, text=black!60] (sig) at (0,-4.4)
        {report-stream hash (SHA-256) \;$+$\; live book-state audit\\[-1pt]catches even \emph{silent} deviations that self-heal before output};
      \node[tier, fill=fungreen, minimum width=3.8cm, font=\bfseries] (t3) at (0,-5.5)
        {\large 0 divergences};
      \draw[down] (t1.south) -- (bn.north);
      \draw[down] (bn.south) -- (t2.north);
      \draw[down] (t2.south) -- (sig.north);
      \draw[down] (sig.south) -- (t3.north);
    \end{tikzpicture}
    \caption{\textbf{One billion messages, zero divergences.} Each of the 160 conforming
    engines is replayed against an independent-engine consensus across 100 random
    seeds (over one billion order messages apiece) and must agree on two independent
    signals: a SHA-256 hash of the full report stream, and a live audit of the order book
    itself (best bid, best ask, depth at every level) sampled at unpredictable points. The
    book audit catches even \emph{silent} deviations that self-heal before reaching the
    output. No conforming engine diverges on any message.}
    \label{fig:consensus}
\end{figure}

\subsection{Methodology}
\label{sec:methodology}

We stress the matcher with synthetic bursts of limit orders calibrated to a highly liquid
equity (NVIDIA): limit prices drawn from a power-law depth distribution with exponent
$\beta = 2.23$ fitted to historical level-hit statistics, quantities uniform in $[1,100]$
shares (depth sensitivity in \S\ref{sec:depth_sensitivity}). Each order is expanded into a
short ``lifetime'' trace, add, optional modify/replace, eventual cancel or
execution, mirroring modern equity flow, where trade-to-order ratios are a few percent
and ${\sim}\,97\%$ of orders cancel before
trading~\cite{secTradeToOrderVolume2022,khomynPutnins2021}. On arrival an order is marked
immediate-or-cancel with probability $p_{\mathit{IOC}} = 0.15$~\cite{liYeZheng2023}
(executing against top-of-book or expiring without posting); non-IOC orders model active
quote management, modified once with probability $p_{\mathit{modify}} = 0.20$ and cancelled
with $p_{\mathit{cancel}} = 0.95$, aligned with the order-to-trade imbalance and short
quote lifetimes in SEC data~\cite{secTradeToOrderVolume2022,secQuoteLife2013}. Non-IOC
lifetimes are exponential with median $0.431\,\mathrm{ms}$, harsher than typical production
but anchored to measured cancellation
mass~\cite{secQuoteLife2013,secQuoteLifeCondFreq2025}. The result is dense micro-bursts of
millions of back-to-back messages with heavy top-of-book churn, the regime we target.

\paragraph{Stochastic mid-price model.}
Real instruments do not trade at a fixed price; the mid-price evolves per order via
geometric Brownian motion (GBM),
\[
\mathit{mid}(t{+}1) = \mathit{mid}(t) \cdot
\exp\!\bigl(-\tfrac{1}{2}\sigma^2 dt + \sigma\sqrt{dt}\,Z\bigr),
\quad Z \sim \mathcal{N}(0,1),
\]
calibrated to NVIDIA (close \$167.52, tick \$0.005 per SEC sub-penny regulation), with order
prices placed relative to the moving mid by the same $\beta = 2.23$ law. The step $dt$ sets
the expected $1\sigma$ log-return over the burst to a target swing, and a fixed seed ensures reproducibility. The head-to-head comparison (\S\ref{sec:comparison}) instead runs
the public harness, which reproduces the same GBM on its canonical seed (23) and drives
every engine through an identical C~ABI.

We report five scenarios: \emph{static} (zero volatility, isolating data-structure from
price-path effects; realized span 1{,}529~ticks, 4.6\%); \emph{normal-trading-day} (15\%
annualized volatility, 2\% swing, an unremarkable liquid large-cap session; 2{,}057~ticks,
6.1\%); \emph{large-swing} (50\% volatility, 25\% swing; 14{,}313~ticks, 42.7\%); and two
\emph{flash-crash} scenarios (50\% volatility, 40\% and 60\% swings; 23{,}932 and
37{,}923~ticks), modeling dislocations like the May~2010 Flash Crash. Unless noted,
single-matcher results use the normal-trading-day workload as the representative operating
point; head-to-head results are the worst case across all five regimes
(\S\ref{sec:comparison}).

\paragraph{Batched delivery.}
A matcher sitting behind a language runtime pays a fixed cost on every C~ABI crossing (cgo
for Go, JNI for Java), so per-message delivery measures its boundary rather than its matcher.
For these the harness exposes an optional batched entry point that pays the crossing once per
run, leaving matching unchanged (array order, no cross-message lookahead) with the report
hash and the random-point book-state audit byte-identical to per-message delivery; we report
it only where it improves worst-case throughput by a material margin ($>0.1$~M/s); under CPython it is a net loss, because both preconditions fail: the
adapters hold the GIL for the whole run on one pinned thread, so there is no per-message
crossing cost to amortize, and CPython executes the per-message loop 10--50$\times$ slower
than the C caller it would replace.

\begin{table*}[t]
    \centering
    \caption{\mtag{blue!12}{ARM Graviton4} \mtag{tagcabi}{C~ABI} \textbf{Worst-case throughput on the public benchmark} (seed~23; the lowest of each engine's five volatility-regime medians). A cross-section of the conforming field, the fastest public engines, a sample written by engineers at leading trading firms, and the most widely-recognized public builds (most-starred on GitHub, a peer-reviewed engine, heavily-starred libraries), followed by three industry-authored engines that do \emph{not} conform. Engines in the first three groups are byte-identical to the consensus (some after our suggested correctness fix). Production decentralized-exchange order books are compared separately in Table~\ref{tab:dex}. The final column is $33.20 \div$ the engine's worst case (n/a for the last group). ${\ddagger}$~=~author publicly identifies as a trading-industry engineer, a personal side project, not their employer's work, affiliation as the author states it publicly; $\star$~=~GitHub stars. The full roster of all 160 conforming engines is public.}
    \label{tab:worst_case_field}
    \small
    \begin{tabular}{l l l r r}
        \toprule
        \textbf{Engine} & \textbf{Lang.} & \textbf{Notability} & \textbf{Worst-case (M/s)} & \textbf{vs.\ Flash One} \\
        \midrule
        \textbf{Flash One} & C++ & \emph{our engine} & \textbf{33.20} & \emph{reference} \\
        \midrule
        \multicolumn{5}{@{}l}{\emph{Fastest public engines}} \\
        e820 / weekend-orderbook\,${\ddagger}$ & C & IMC engineer; fastest public conformer & 8.19 & $4.1\times$ \\
        geseq / cpp-orderbook & C++ & author-contributed C++ port & 7.94 & $4.2\times$ \\
        CppTrader & C++ & 1{,}041$\star$; own published ${\sim}7.2$ (own workload) & 7.26 & $4.6\times$ \\
        \midrule
        \multicolumn{5}{@{}l}{\emph{Engineers at leading trading firms} (personal projects)} \\
        yashkukrecha\,${\ddagger}$ & C++ & Jump Trading engineer & 6.26 & $5.3\times$ \\
        daniele\,${\ddagger}$ & C++ & Optiver engineer & 5.36 & $6.2\times$ \\
        shivaganapathy\,${\ddagger}$ & C++ & IMC engineer & 2.15 & $15\times$ \\
        robdev\,${\ddagger}$ & Rust & CME Group engineer & 1.76 & $19\times$ \\
        javalob\,${\ddagger}$ & Java & JPMorgan engineer & 1.03 & $32\times$ \\
        kennethzhang\,${\ddagger}$ & C++ & Squarepoint Capital quant & 0.86 & $39\times$ \\
        \midrule
        \multicolumn{5}{@{}l}{\emph{Widely-recognized engines}} \\
        StockSharp & C\# & 10{,}236$\star$ (most-starred) & 1.64 & $20\times$ \\
        Exchange-core & Java & 2{,}556$\star$ & 1.40 & $24\times$ \\
        Liquibook & C++ & 1{,}479$\star$ & 0.03 & $>1{,}000\times$ \\
        parity & Java & 502$\star$; Parity Trading & 2.21 & $15\times$ \\
        CoinTossX & Java & 122$\star$; peer-reviewed & 0.03 & $>1{,}000\times$ \\
        \midrule
        \multicolumn{5}{@{}l}{\emph{Non-conforming engines}} \\
        amer\,${\ddagger}$ & Java & ex-exchange engineer & \emph{diverged} & n/a \\
        raunakchopra\,${\ddagger}$ & C++ & prop-trading-firm engineer & \emph{crash} & n/a \\
        nilesh05apr\,${\ddagger}$ & C++ & prop-trading-firm intern & \emph{infeasible} & n/a \\
        \bottomrule
    \end{tabular}
\end{table*}

\begin{table}[t]
    \centering
    \caption{\mtag{blue!12}{ARM Graviton4} \mtag{tagcabi}{C~ABI} \textbf{Order books that run, or ran, real decentralized exchanges} on the identical workload, the same matching problem, a different deployment target. Upper group: measured \emph{de-chained} as native code (blockchain execution stripped, a best case). Lower group: run through the chain's own virtual machine. ``did not finish'' = did not clear the 2M-message workload within the 600\,s budget; a ``read-only accessor patch'' = the book is private to the engine's own package, so the state audit reads it through a read-only accessor we added; the matching path is unmodified and fully exercised.}
    \label{tab:dex}
    \small
    \setlength{\tabcolsep}{2pt}
    \begin{tabular}{l r l}
        \toprule
        \textbf{Engine} & \textbf{M/s} & \textbf{Note} \\
        \midrule
        \multicolumn{3}{@{}l}{\emph{De-chained as native code}} \\
        Serum & 2.63 & Solana; original CLOB \\
        Manifest & 2.15 & Solana CLOB \\
        Phoenix & 1.50 & Solana; Ellipsis Labs \\
        dYdX~v4 & 0.11 & read-only accessor patch applied \\
        Vega & 0.08 & read-only accessor patch applied \\
        \midrule
        \multicolumn{3}{@{}l}{\emph{Through the chain's VM}} \\
        Clober & ${\sim}0.02$ & EVM; $2^{15}$/tick design cap \\
        Econia & \emph{did not finish} & Aptos Move VM \\
        Osmosis & \emph{did not finish} & Cosmos; wasmer \\
        DeepBook & \emph{did not finish} & Sui Move VM \\
        \bottomrule
    \end{tabular}
\end{table}

\begin{table*}[t]
    \centering
    \caption{\mtag{blue!12}{ARM Graviton4} \mtag{tagcabi}{C~ABI} \textbf{Worst-case throughput by book structure.} The conforming field grouped by the data structure each engine uses to order price levels, the core structural choice in a FIFO matcher. For each class we name the fastest engine and the best worst-case throughput any engine reached with that structure: an upper bound, since language and implementation confound a pure structural comparison (every class leader above ${\sim}2$~M/s is C, C++, or Rust). Every classical design tops out at \textbf{8.19~M/s} or below; only the contiguous PIN with a neighbor-aware price tree clears the band, at \textbf{33.20~M/s}. \#~is the number of conforming engines in that class: the twelve conventional structures account for 159, and Flash One (the sole PIN engine) is the 160th.}
    \label{tab:book_structure}
    \small
    \setlength{\tabcolsep}{4pt}%
    \begin{tabular}{l l l r r l}
        \toprule
        \textbf{Book structure (price ladder)} & \textbf{Fastest engine} & \textbf{Lang.} & \textbf{Worst-case (M/s)} & \textbf{\#} & \textbf{Note} \\
        \midrule
        \textbf{PIN + neighbor-aware tree} & \textbf{Flash One} & C++ & \textbf{33.20} & 1 & \emph{our design} \\
        \midrule
        Binary heap / priority queue & e820 & C & 8.19 & 14 & hybrid: arena + heaps \\
        Red--black tree / ordered map & cpp-orderbook & C++ & 7.94 & 59 & the default, 37\% of the field (\texttt{std::map}, \texttt{TreeMap}) \\
        Sorted vector & melin & Rust & 7.86 & 16 & \\
        AVL tree & CppTrader & C++ & 7.26 & 5 & balanced BST \\
        $B$/$B^{+}$-tree & raymondshe & Rust & 7.20 & 27 & \texttt{BTreeMap} per side \\
        Plain BST (unbalanced) & Kautenja & C++ & 6.88 & 1 & \\
        Flat direct-indexed array & asthamishra & Rust & 5.60 & 14 & $O(1)$ by tick; cache-thrashes a wide domain \\
        Adaptive-radix / crit-bit tree & serum & Rust & 2.63 & 3 & de-chained to native code \\
        Linked-list price levels & coralme & Java & 1.97 & 2 & $O(n)$ level lookup \\
        Skip-list & apex & Rust & 1.62 & 8 & lock-free, but pointer-chasing hurts \\
        Hash-map of levels + best cache & dYdX & Go & 0.11 & 5 & $O(1)$ level; $O(\text{levels})$ best rescan \\
        Other (splay / hash / flat list) & pantelwar & Go & 0.07 & 5 & outside the main families \\
        \bottomrule
    \end{tabular}
\end{table*}

\subsection{Head-to-Head Comparison}
\label{sec:comparison}

\mtag{blue!12}{ARM Graviton4} \mtag{tagcabi}{C~ABI} Our matching engine sustains \textbf{33.20~M/s} worst-case on a single core, scaling to
\textbf{${\sim}\,640$~million~msgs/s} across a 96-core node: the single-matcher measurement and
its scale-out are detailed in \S\ref{sec:single_matcher} and \S\ref{sec:instance_throughput}.
Here we place that result against every public matching engine we could find.

Every result in this section comes from the public benchmark harness, which drives
every engine (ours and the baselines alike) through the identical C~ABI on the
identical seed-23 workload (\S\ref{sec:methodology}). The benchmark's definitional result is \emph{worst-case}
throughput: the lowest of an engine's five volatility-regime medians, the regime it
handles worst, because the matcher must absorb the burst in whatever regime the market
happens to be in.

\paragraph{Common protocol.}
All engines are measured on the same platform (Section~\ref{sec:implementation}),
driven through the harness's C~ABI; the reported throughput is the
median of ten fresh-process perf runs per scenario, each verified against the reference
hash. Every engine emits OrderAck, Trade, CancelAck (including IOC residuals), ModifyAck,
and the reject responses to an identical output queue serviced by a dedicated thread on
an adjacent core, so the matcher-to-publisher hand-off is counted uniformly. Native
engines are compiled with GCC~14.2.0 at \texttt{-O3 -march=native}; engines behind a language runtime are
driven through the batched entry point wherever it helps them (\S\ref{sec:methodology}), so
a runtime boundary is amortized rather than measured.

\subsubsection{Correctness verification}
\label{sec:correctness}

Throughput is only comparable among engines that compute the same thing, so every timed
run is gated on correctness against a \emph{byte-identical consensus}, the field's independent
yardstick, not the competition: none of the three anchors is fast, or ours. Liquibook~\cite{liquibook}
and Exchange-core~\cite{exchangecore} reproduce it unmodified, while QuantCup~1~\cite{quantcupPTME}, a
2011 contest entry, joins after a one-line change adopting the maker-price fill convention the other
two (and the field) already use, its matching algorithm untouched (every anchor patch is published
with the harness). The oracle therefore encodes the field's shared convention, not any one engine's.
Agreement with it is held to a demanding bar: each conforming engine reproduces the consensus
\textbf{byte-for-byte across 100 randomly selected seeds, more than one billion order messages per
engine}, every message cross-checked against that independent-engine reference (Figure~\ref{fig:consensus}). Two independent signals must both hold. The full report stream is hashed
(SHA-256), so a single divergent acknowledgment, trade, cancel, or reject
anywhere in that billion fails the engine; and, separately, the live order book itself, best
bid, best ask, and depth at every level, is audited against the consensus at random,
unpredictable points on every run. That second signal catches even \emph{silent} deviations
that never reach the output: a stale or phantom book state that self-heals before the next
order message would slip past any output check, but not the state audit. Stochastic, cancel-dominated workloads at this scale expose
divergences that hand-written test cases miss, multi-level cancellation, identifier
deduplication, best-price advancement under churn, which is why correctness is a
prerequisite for any speed comparison, not an afterthought. The bar constrains only the output, never the design: it prescribes no algorithm or data structure and gives every engine a generous 600-second ceiling a correct implementation clears in seconds, so a divergence or timeout against it reflects the implementation at the tested commit, not the design it chose.

\subsubsection{The audited field}
\label{sec:audited_field}

The harness has driven \textbf{247 distinct matching engines}, every common FIFO
book architecture, across more than twenty source languages, spanning production
libraries, reference implementations, and pedagogical projects, through this identical
workload and correctness oracle. \textbf{160 reproduce the consensus}; the other 87 are non-conforming. The line between the two is one of effort: every fix we
drafted to bring an engine into conformance is a \emph{minimal, localized patch}, filed
upstream, never a codebase-wide rewrite, and the 87 are those for which we found no such
patch, so reaching the consensus would take a structural change or a large edit to the
codebase. They have diverged, crashed, or failed to clear their worst scenario within the
message budget; a small minority do so not from a defect but from a deliberate design choice we did not
override, a documented alternative pricing convention, an on-chain per-call cap, a
restricted price domain. Sorted by descending worst-case
throughput (Table~\ref{tab:worst_case_field}), our engine
sustains \textbf{33.20~M/s} and leads the entire field. The fastest conforming public
engine reaches \textbf{8.19~M/s}, a \textbf{${\sim}25$~M/s} gap to our 33.20, and that ceiling-setter is itself a trading-industry professional's work: the \texttt{e820} order book, by an IMC engineer. The margin over the field's other recognizable, high-pedigree names is starker still: \textbf{$4.6\times$} over
CppTrader (1{,}041~GitHub stars; its own published ${\sim}7.2$~M~upd/s, measured under its own
workload rather than ours, lands in the same band as the 7.26 we
measure) and \textbf{$4.9\times$} over an ex-Citadel engineer's red-black-tree book (6.825~M/s, the ex-Citadel bar in Figure~\ref{fig:cliff}). The margin only widens across the field's best-known names: the most-starred matching engine on GitHub (StockSharp, 10{,}236~stars), the peer-reviewed CoinTossX~\cite{jericevich2022cointossx}, and heavily-starred libraries like Exchange-core and Liquibook all fall \textbf{$20$--$1{,}000\times$} short (Table~\ref{tab:worst_case_field}), and the order books that run real decentralized exchanges fare no better (Table~\ref{tab:dex}).

The same ceiling holds inside the industry, and an industry engineer sets it. The survey includes 73 engines written by
people who publicly identify as professional trading-industry engineers and
quants, at market-making, prop-trading, and hedge-fund firms, and on exchange and vendor
teams (personal projects, affiliations self-stated), alongside official vendor/org
repositories and production DEX matchers. Of the 73, 20 conform as
shipped and 25 after our suggested correctness fix; the other 28 have
diverged, crashed, or failed to finish their worst scenario. The fastest of them all is \texttt{e820}, an IMC engineer's, at
\textbf{8.19~M/s}, still \textbf{$4.1\times$} short of our 33.20~M/s on the identical
workload. They reach for the same classical price-ladder structures as the rest of the field,
and they land in the same band: what sets \textbf{the ceiling is the data structure, not the engineer.}
These findings are a reproducible snapshot offered back to each author: several are already
fixed upstream, and personal side projects reflect goals that differ from a production
venue's. The affiliations we show are the authors' own public self-descriptions, reproduced as
stated and not independently verified by us as facts of employment; they are offered as context
on the community that builds matching engines and on what the resulting designs achieve, not as
a statement about, on behalf of, or in judgment of any named person or company. This paper sorts
pinned commits by worst-case throughput on one workload; no number or finding in it ranks or
judges an author's engineering quality: the harness merely reports what each pinned commit does.

The same oracle is the field's most exacting bug-finder. Of the 247 engines, only
\textbf{47 are pristine as shipped}, byte-identical to the consensus with no correctness
defect (Figure~\ref{fig:waffle}). What it surfaced, we reported: phantom trades, orders silently
dropped or double-counted, price--time priority violated, book state corrupted. Across the 247 we filed \textbf{181 GitHub issues upstream}, each respectfully filed,
with the mechanism, a reproduction, and a suggested patch; together they report more than
\textbf{250 distinct findings}, several bundling up to five. That record is itself an independent
check on the oracle: \textbf{25 are already closed, every one as \emph{completed}, by the
engine's own maintainer}. Not one was declined, marked \emph{wontfix}, or closed as
not-planned; maintainers confirmed the defect, thanked us for the report, and in several cases
shipped the fix within days. Where the field has adjudicated our findings against its own code,
it has so far agreed with every one. We publish no tally of defective engines: whether a given representational limit is a
defect or a deliberate design decision is a judgment we decline to make on an author's behalf,
and where we found no defect we filed nothing. Our own worst
case is the \emph{normal trading day}: the engine is no slower, and somewhat faster,
in every stress regime, so 33.20~M/s is a floor across all five regimes, not a
cherry-picked peak. The \emph{shape} of the field is itself the evidence
(Figure~\ref{fig:cliff}): all 159 open-source conforming engines form a smooth
continuum to 8.2~M/s, adjacent designs separated by the single-digit-percent margins a
faster language, better cache layout, or an OS tweak buys, the scale at which a 5--10\%
gain already counts as a substantial HFT optimization, and then a ${\sim}25$~M/s empty band no
engine occupies before our 33.2~M/s. No accumulation of implementation tweaks crosses a
gap that size; a different \emph{architecture} does. Because every engine ran on the same
machine, none of this is a hardware effect: hardware is held constant, so the continuum
within the band is language and implementation and the gap above it is architecture: the only
free variable in the comparison is the data structure. That boundary is legally claimed, not merely observed: the order-storage architecture that clears it is the subject of our patent portfolio (\S\ref{sec:ip}), which spans its software and hardware embodiments and every memory layout (array, shared arena, or pooled nodes alike).

\paragraph{Production decentralized exchanges.} The order books that run real decentralized exchanges face the same matching problem on a different deployment target (Table~\ref{tab:dex}): they are engineered for on-chain determinism and consensus safety, not the single-core throughput measured here, so these figures set scale rather than a like-for-like target. Measured \emph{de-chained} as native code (stripping the blockchain's execution overhead, a best case) the fastest (Serum, the original Solana order book, now succeeded by OpenBook) reaches 2.6~M/s, still $13\times$ slower than Flash One, and dYdX's and Vega's books run near $0.1$~M/s. Measured through their actual chain virtual machine, the rest run at ${\sim}0.02$~M/s (Clober) or did not clear the 2M-message workload within the budget (Econia, Osmosis, DeepBook).

\paragraph{Why a fast matcher looks easy to build.} A fast matcher is widely assumed to
be easy to build, because a textbook tree-of-lists implementation \emph{is} easy. That
intuition mistakes \emph{implementing} a
matcher for \emph{inventing} a fast one, and it survives only because the gap has
never been plotted. Figure~\ref{fig:cliff} plots it: the 159 open-source engines that even reproduce
correct output all top out at 8.2~M/s, the band that language choice and cache tuning
buy, and the 73 industry-authored engines sit inside it, side projects and
official repositories alike, 53 of which did not reproduce the consensus as shipped. Nothing occupies the ${\sim}25$~M/s band above. Flash One reaches
33.2~M/s not as a faster \emph{implementation} of the same design but as a different
\emph{architecture}, a gap invisible to intuition, and unmistakable once it is measured and
drawn.

The ceiling is a property of the data structure, not the code (Table~\ref{tab:book_structure}):
every classical price-ladder design, balanced trees, heaps, sorted vectors, skip-lists, flat
arrays, tops out at \textbf{8.19~M/s} or below, and only the contiguous PIN paired with a
neighbor-aware price tree crosses the ${\sim}25$~M/s band into our \textbf{33.20~M/s}.

\subsubsection{Pre-trade risk stays off the matching path}
\label{sec:pipeline_overhead}

Sustained matcher throughput depends on keeping stateful, allocator-heavy work off the
serial hot path, which is exactly how production securities exchanges are architected.
At CME Globex, Nasdaq INET, NYSE Arca/Pillar, and LSE Millennium, the matching path
performs price--time priority matching, order-book management, trade-report generation,
and a set of constant-time order-level validations (max size, self-trade prevention,
price bands) that may execute inline; what it never does is a balance, margin, or credit
lookup. Pre-trade risk is distributed across layers, none of which touch the matching
path: (i)~broker/member-firm systems enforce buying power, margin, and credit under SEC
Rule~15c3-5 (the Market Access Rule), where customer balances are held, before an order
reaches the exchange; (ii)~the exchange gateway performs lightweight $O(1)$ checks such as
rate throttling and fat-finger / maximum-size guards; (iii)~constant-time order-level
validations such as self-trade prevention and price-band reasonability test the incoming
order against book state in $O(1)$ with no external lookup; and (iv)~the clearinghouse
(OCC, DTCC, CME~Clearing) handles margin and mark-to-market on batch cycles, not
per-order. Our pipeline follows this architecture: the matching path is kept free of
stateful balance and margin checks, which execute upstream and at clearing.

\subsection{Single Matcher Performance}
\label{sec:single_matcher}
\mtag{amdgreen}{AMD EPYC} \mtag{tagimproc}{In-process} In the single-matcher microbenchmark, each run replays a burst of
${\sim}\,2$M messages generated by the workload above, delivered
back-to-back with no inter-message gap.  Orders are injected directly
into the matcher thread for the throughput measurements, and through a data transport for the latency measurements, both of which publish events to a dedicated output
queue drained by a separate thread pinned to an adjacent core.

A single core on AMD~EPYC processors sustains \textbf{over 40~M~msgs/s} on the normal-trading-day workload (${\sim}$2\% swing), the representative operating point. Because the operations that dominate this cancel-heavy workload, top-of-book appends and mid-queue cancels, resolve in $O(1)$, and only price-level creation and deletion, including a cancel or match that empties a level, add a single $O(\log n)$ rebalancing walk over the small active-level set, throughput is bounded by the data structures, not the price path, and the engine remains in a stable, CPU-bound regime with no evidence of phase changes or pathological tails.  In raw throughput, the matcher reaches the same order
of magnitude as specialized in-memory key-value systems such as
FaRM~\cite{farm-nsdi14} and MICA~\cite{mica-nsdi14}, despite the
multi-step per-message state cascade under strict serial ordering, unlike a point query
(\S\ref{sec:background}).

\subsubsection{Book depth sensitivity}
\label{sec:depth_sensitivity}

The engine also holds throughput as the resting book deepens. Prefilling the book from its
transient ${\sim}$41 active price levels to an ${\sim}$800-level deeply resting book (400
levels/side~$\times$~50 orders, far beyond any empirical distribution) costs 11.4\% of
throughput (41.58 down to 36.85~M~msgs/s): the matcher degrades gracefully with the larger
working set, with no additional per-message algorithmic work.

\subsubsection{Context-switch overhead}
\label{sec:ctxswitch}

Real exchanges multiplex many symbols per core. To quantify this application-level context switching, we benchmark a matcher while varying the number of active books from 1 to 10{,}000 symbols. Orders follow the same workload parameters as above, with symbols drawn from a Zipf popularity distribution with $\alpha = 1.2$ on AMD~EPYC 5th-generation silicon (\texttt{r8a.metal-24xl}).

\begin{table}[h]
    \centering
    \caption{\mtag{amdgreen}{AMD EPYC} \mtag{tagimproc}{In-process} Multi-Symbol Scaling Performance (\texttt{r8a.metal-24xl}). Throughput and per-message latency are measured on \textbf{a single core} as the number of active symbols increases; measurement cores fully isolated (\S\ref{sec:runtime}). \newline }
    \label{tab:symbol_scaling}
    \small
    \begin{tabular}{r r r r r}
        \toprule
        \textbf{Symbols} & \textbf{T-put} & \textbf{Latency} & \textbf{vs.} & \textbf{Overhead} \\
        (Count) & (M/s) & (ns) & \textbf{Base} & (\%) \\
        \midrule
        1      & 41.21 & 24.3 & \textit{Baseline} & 0\%  \\
        10     & 35.80 & 27.9 & 0.87x & 13\% \\
        50     & 36.42 & 27.5 & 0.88x & 12\% \\
        100    & 35.97 & 27.8 & 0.87x & 13\% \\
        250    & 34.72 & 28.8 & 0.84x & 16\% \\
        500    & 34.16 & 29.3 & 0.83x & 17\% \\
        1,000  & 31.67 & 31.6 & 0.77x & 23\% \\
        2,500  & 27.81 & 36.0 & 0.67x & 33\% \\
        5,000  & 24.60 & 40.7 & 0.60x & 40\% \\
        10,000 & 22.60 & 44.2 & 0.55x & 45\% \\
        \bottomrule
    \end{tabular}
\end{table}

Table~\ref{tab:symbol_scaling} summarizes results: throughput falls smoothly from 41.21~M/s at a single symbol to 22.60~M/s at 10{,}000 books (24.3 to 44.2~ns/msg), staying at or above 0.55$\times$ the single-symbol baseline throughout.

The degradation is dominated by locality effects rather than algorithmic work: each message incurs an extra per-symbol pointer dereference, symbol decode and index, and more frequent working-set switches across books, which reduce cache effectiveness.

\subsection{End-to-End Host-Path Response Latency}
\label{sec:full_pipeline}

On receipt of a new order, the matching thread generates the acknowledgment before running matching, the same ack-on-receipt semantics as Nasdaq~INET's OUCH protocol. The acknowledgment, like every output event, then flows through the inter-thread output stage, where it is serialized to an OUCH~5.0 response and the corresponding ITCH~5.0 market-data message, with event order preserved, so a client always observes its acceptance before any execution of the same order.

We measure end-to-end host-path response latency across this entire pipeline: from each message's scheduled wire arrival, through OUCH~5.0 inbound parsing on the ingress thread, hand-off to the matching stage, single-book matching, hand-off to egress, and OUCH/ITCH~5.0 encoding, to the egress timestamp: everything an exchange gateway does short of the NIC and network. Two methodological choices make the numbers honest. First, load is applied open-loop at a fixed offered rate and each message's latency is measured from its \emph{scheduled} arrival time, so the measurement is coordinated-omission-free and queueing delay imposed by the system is never hidden. Second, the workload is the single-symbol, single-book stream (1M~NEW orders plus their cancels and modifies) used for the single-matcher measurements (\S\ref{sec:single_matcher}), and every trial's matching output is cross-checked byte-for-byte against that verified reference. We report the median of ten trials.

At an offered load of 5~M~msgs/s per matcher segment on AMD~EPYC 5th-generation silicon (\texttt{r8a.metal-24xl}), end-to-end host-path response latency from order arrival to acknowledgment is \textbf{P50 = 174~ns} and \textbf{P99 = 270~ns}, the latency-optimized median of ten trials, the median holds this sub-microsecond floor as offered load climbs (207~ns at 13~M~msgs/s), and the P99 stays sub-microsecond through 13~M~msgs/s (844~ns, ten of ten trials under a microsecond). The P99 stays under $10~\mu$s up to a knee near \textbf{42~M~msgs/s} (Table~\ref{tab:latency_knee}), the edge of the clean operating envelope, and no trial saturates below 43~M~msgs/s; worst-case throughput through the C~ABI harness is \textbf{41.08~M~msgs/s} on the identical workload; the in-process figures reported here and in Table~\ref{tab:symbol_scaling} sit slightly above that, since they carry no ABI marshaling or dispatch. Judged by worst-case throughput (Table~\ref{tab:worst_case_field}), this 5~M~msgs/s operating point already exceeds the \emph{entire} worst-case throughput of \textbf{142 of the 160} conforming engines, yet our engine serves that load at about an eighth of its own sustainable throughput.

\begin{table}[h]
    \centering
    \caption{\mtag{amdgreen}{AMD EPYC} \mtag{tagimproc}{In-process} \mtag{tagcodecin}{codec-inclusive} \textbf{Acknowledgment-path latency vs.\ offered load} (single-socket \texttt{r8a.metal-24xl}; the same design recompiled and tuned for the platform; measurement cores fully isolated, \S\ref{sec:runtime}). P50/P99 of end-to-end host-path response latency, order arrival to acknowledgment, codec-inclusive (inbound OUCH parse through OUCH/ITCH encode), per matcher segment, measured open-loop, latency-optimized median of ten trials. Both the median and the P99 stay sub-microsecond through \textbf{13~M~msgs/s} (P99 844~ns, ten of ten trials under a microsecond); the median then climbs to a ${\sim}3~\mu$s plateau while the P99 stays under $10~\mu$s to \textbf{42~M~msgs/s} (7.23~$\mu$s), the edge of the clean operating envelope, then jumps to 35~$\mu$s at 42.5 and 84~$\mu$s at 43 (last two rows), where saturated trials first appear (3 of 10). The curve reproduces within a few percent under the other identifier allocations (Knuth-scrambled, stride-aligned, and session-clustered), so latency, like throughput (Table~\ref{tab:id_robustness}), is insensitive to identifier distribution.}
    \label{tab:latency_knee}
    \footnotesize
    \begin{tabular}{r r r}
        \toprule
        \textbf{Offered load} & \textbf{Ack P50} & \textbf{Ack P99} \\
        \textbf{(M~msgs/s)} & \textbf{(ns)} & \textbf{(ns)} \\
        \midrule
        1  & 182 & 279 \\
        5  & 174 & 270 \\
        13 & 207 & 844 \\
        24 & 1{,}011 & 4{,}379 \\
        32 & 3{,}109 & 4{,}328 \\
        36 & 3{,}101 & 4{,}166 \\
        40 & 3{,}257 & 4{,}680 \\
        41 & 3{,}430 & 5{,}334 \\
        42 & 3{,}860 & 7{,}231 \\
        \midrule
        42.5 & 4{,}517 & 35{,}054 \\
        43 & 7{,}882 & 83{,}769 \\
        \bottomrule
    \end{tabular}
\end{table}

The same operating point resolves every message type well under a microsecond (Table~\ref{tab:op_latency}). Because the acknowledgment is generated on receipt, a crossing order's fill is published ${\sim}$56~ns after its own ack (P50 232~ns against 176~ns), so acceptance is confirmed before execution on the same ordered output stream. Cancel confirmation, the latency market makers watch and one rarely reported, is the fastest acknowledgment at \textbf{P50 164~ns / P99 230~ns}; the flow is cancel-dominated (742k cancel-acks per trial against 59k fills), the reject early-exits sit at or below their acks, and every operation resolves under \textbf{400~ns} at P99. These are per-matcher-segment host-path figures that include full OUCH/ITCH wire-protocol processing; they are not comparable to matcher-only service times or to venue wire-to-wire numbers, which carry full production stacks and networks this measurement excludes.

\begin{table}[h]
    \centering
    \caption{\mtag{amdgreen}{AMD EPYC} \mtag{tagimproc}{In-process} \mtag{tagcodecin}{codec-inclusive} Per-operation host-path latency at the 5~M~msgs/s operating point (\texttt{r8a.metal-24xl}; measurement cores fully isolated, \S\ref{sec:runtime}), P50/P99 over ten trials of ${\sim}2$M messages each (lower-median across trials). Every operation resolves under 400~ns at P99; cancel acknowledgment, the market-maker latency, is the fastest of the acknowledgments, and reject early-exits sit at or below their acks. The new-order-ack row reproduces the Table~\ref{tab:latency_knee} headline within run-to-run scatter.}
    \label{tab:op_latency}
    \small
    \begin{tabular}{l r r r}
        \toprule
        \textbf{Operation} & \textbf{P50 (ns)} & \textbf{P99 (ns)} & \textbf{Count/trial} \\
        \midrule
        New-order ack & 176 & 273 & 944{,}931 \\
        Fill (trade)  & 232 & 394 & 58{,}994 \\
        Cancel ack    & \textbf{164} & \textbf{230} & 742{,}088 \\
        Cancel reject & 162 & 231 & 36{,}085 \\
        Modify ack    & 203 & 281 & 155{,}568 \\
        Modify reject & 177 & 243 & 7{,}425 \\
        \bottomrule
    \end{tabular}
\end{table}

\paragraph{Robustness to order-identifier distribution.} Resolving a cancel by using the order identifier directly as a slot index is fast only while identifiers stay dense and small; under the sparse, high-entropy 64-bit identifiers a production venue can issue, that shortcut degrades into a scattered, cache-missing probe. Flash One does not take it: it resolves identifiers through an order-identifier index rather than by direct slot addressing, so identifier density never reaches the matching hot path. We verified this on the AMD platform across four identifier-allocation patterns that span how venues issue order IDs, from dense sequential to maximally sparse, stride-aligned shard and venue tags, and block-granted session clustering (the most production-realistic): single-core throughput shows no measurable dependence on the identifier scheme (all four patterns sit within run-to-run noise) and cancel-acknowledgment latency stays indistinguishable across the four (Table~\ref{tab:id_robustness}). Swapping in a direct slot-indexed lookup, and nothing else, would collapse the identical workload by up to \textbf{30$\times$}; that lookup is not the one we ship, and the figures measure only what such a substitution would cost. The reported single-symbol numbers are a property of the engine, not an artifact of dense benchmark identifiers.

\begin{table}[h]
    \centering
    \caption{\mtag{amdgreen}{AMD EPYC} \mtag{tagimproc}{In-process} \textbf{Robustness to order-identifier allocation} (single-symbol throughput, M/s; measurement cores fully isolated, \S\ref{sec:runtime}). Four allocation patterns spanning how venues issue order IDs: dense sequential, maximally sparse (Knuth-scrambled), stride-aligned (the low-bits-constant shape of shard and venue tags), and session-clustered (block-granted allocation as in per-session FIX counters, per-gateway prefixes, and sharded sequencers, the most production-realistic). Our order-identifier index shows no measurable dependence on the pattern, all four within run-to-run noise, and cancel-acknowledgment latency is likewise flat (tracking Table~\ref{tab:op_latency}); the direct slot-indexed lookup we do \emph{not} ship keeps pace with it only on artificially dense identifiers and collapses otherwise.}
    \label{tab:id_robustness}
    \small
    \begin{tabular}{l r r}
        \toprule
        \textbf{Identifier pattern} & \textbf{Flash One} & \textbf{Slot-indexed} \\
         & \textbf{(shipped)} & \textbf{(not shipped)} \\
        \midrule
        Dense sequential & 41.58 & 42.91 \\
        Sparse (Knuth-scrambled) & 40.84 & \textbf{1.42} \\
        Stride-aligned ($\times$4096) & 41.20 & \textbf{1.48} \\
        Session-clustered (block-granted) & 40.96 & \textbf{15.57} \\
        \bottomrule
    \end{tabular}
\end{table}

\subsubsection{Instance-level aggregate throughput}
\label{sec:instance_throughput}
\mbox{}\\
\mtag{blue!12}{ARM Graviton4} \mtag{tagimproc}{In-process}
The preceding experiments measure a single matcher core.  To characterize
the throughput ceiling of a node instance, we scale the pipeline to
multiple matcher segments on the same 96-core machine
(Section~\ref{sec:implementation}), each segment servicing a disjoint
partition of 10{,}000 total symbols.  Each segment is driven by an independent
Zipf($\alpha=1.2$) stream over its own allocated symbols, so no single
symbol concentrates the whole instance's load on one core; within
each symbol, order arrivals follow the standard power-law ($\beta=2.23$)
depth distribution.  The number of segments is tuned to balance
per-matcher cache residency against total core utilization.

Under apples-to-apples conditions with all matchers servicing the
realistic Zipf workload, the engine sustains \textbf{${\sim}\,640$~million
messages per second} across 10{,}000 symbols on a single instance
(643.6~M/s median, $\pm$~14.6~M/s standard deviation over 10 runs).
Beyond the optimal point, aggregate throughput plateaus and then
decreases as additional matchers increase L3 cache contention from
concurrent working-set switches across books and compete with dedicated
I/O cores for memory bandwidth. The plateau is consistent with an
instance-level ceiling set by shared cache and memory-bandwidth capacity
rather than per-core compute, the
bottleneck that the FPGA embodiment, with dedicated per-symbol BRAM
partitions and no shared cache hierarchy, is designed to relieve
(\S\ref{sec:integration}).  Because concurrent matchers share the
machine's L2/L3 capacity and memory bandwidth, per-matcher throughput in
the full instance is lower than a single matcher run in isolation.

The realistic per-symbol distribution, Zipf-distributed flow across
symbols combined with power-law depth within each symbol, concentrates
${\sim}$80\% of orders on the top ten price levels of each active
symbol, keeping each matcher's hot order-book nodes L1-resident.
A cancel that leaves its price level non-empty, the common case, resolves in $O(1)$ on cache-resident nodes; only a level-emptying cancel adds the $O(\log n)$ rebalance.  This production-realistic load pattern, rather than a uniform
synthetic one, is what real exchanges experience: a small number of
heavily-traded symbols dominate volume, and within each symbol orders
cluster near the top of book.

For scale, this is over $20\times$ the CTA consolidated quote feed's 27~million~msgs/s
provisioned rate and above the OPRA options feed's Q3~2024 sustained peak
(44.8~million~msgs/s)~\cite{ctaPlan,cboeOPRA2024}; those are consolidated market-data
dissemination feeds carrying different message types across the whole listed universe, not
single-node order-matching workloads, so the comparison sets magnitude, not a like-for-like
capability.

\subsection{Integration into Production Exchange Infrastructure}
\label{sec:integration}

We provide a license to practice our patent portfolio, with our proprietary optimizations delivered in binary form.
The matching engine architecture evaluated in this paper is designed
to integrate into existing exchange infrastructure as a drop-in library, not as a full replacement.

\textbf{Network ingress.}
The engine is a drop-in library of callable functions on the matching hot path; it consumes
parsed order descriptors, not raw packets, so existing exchange networking infrastructure
works with it unaltered. On AWS (Nitro/ENA, single-AZ cluster placement group), with kernel
bypass in place, we observe low-tens-of-microseconds NIC-to-NIC round trips dominated by
the cloud fabric, so the matching core's P50~174~ns host-path latency
(\S\ref{sec:full_pipeline}) is never the bottleneck in a deployed system.

\textbf{Deployment model.}
The shard-per-core, shared-nothing architecture maps naturally to
both on-premise exchange deployments and cloud infrastructure.
Each matcher shard owns all state for its assigned symbol range
and communicates with other pipeline stages only through bounded
queues.  No locks, no shared mutable state, no cross-core
synchronization on the critical path.  Symbol-to-core assignment
is configurable: hot symbols can be isolated on dedicated cores,
while less active symbols can be multiplexed
(Table~\ref{tab:symbol_scaling} characterizes the scaling behavior
up to 10{,}000 symbols per core).

\textbf{Hardware acceleration path.}
The multi-symbol cost is architectural, not algorithmic: multiplexing many books through
one shared cache hierarchy degrades locality; per-message latency rises from 24.3~ns at
one symbol to 44.2~ns at 10{,}000 (Table~\ref{tab:symbol_scaling}), which no software
optimization removes. The PIN maps directly onto dedicated on-chip memory: each book in
its own BRAM partition, no shared cache, matcher pipelines in parallel, turning that
smooth CPU degradation into linear scaling to a hard resource wall. An FPGA report of the
PIN and neighbor-aware tree, as specified here, is underway; the CPU results validate the
architecture it accelerates.

\begin{table*}[!t]
    \centering
    \caption{\mtag{amdgreen}{AMD EPYC} \mtag{tagcabi}{C~ABI} \mtag{tagcodecfree}{codec-free} \textbf{Response latency under burst load} (across the full message mix; wire codec excluded identically for all engines, so the comparison is protocol-fair), P50\,/\,P99 \textbf{in nanoseconds}, each
    engine in its weakest scenario, measured open-loop and coordinated-omission-free from
    every message's scheduled arrival at the stated offered rate, on \texttt{r8a.metal-24xl}, measurement cores fully isolated (\S\ref{sec:runtime}). ${\dagger}$~=~offered load exceeds the engine's
    sustainable throughput ($\rho>1$): the queue grows without bound, so the entry is the
    median delay accrued over the fixed ${\sim}$2M-message burst, not a convergent latency.
    Every engine eventually diverges as $\rho\to1$; our engine's saturation knee is near 42~M~msgs/s.
    At 12~M~msgs/s every baseline diverges in all ten trials (Project~D in 4 of 10 already at 8~M~msgs/s); ours diverges in none.
    The engines shown are a representative high-throughput group (worst-case ceilings in the ${\approx}\,6$--$8$~M/s band), chosen so the tested offered loads bracket their saturation points and isolate the objective latency effect of matcher saturation, not a judgment or verdict on any individual engine, whose identities are anonymized as Projects~A--D.}
    \label{tab:burst_latency}
    \small
    \begin{tabular}{l r r r}
        \toprule
        \textbf{Engine (weakest scenario)} & \textbf{5~M~msgs/s} & \textbf{8~M~msgs/s} & \textbf{12~M~msgs/s} \\
        \midrule
        \textbf{Flash One} (normal) & \textbf{188\,/\,365} & \textbf{188\,/\,407} & \textbf{213\,/\,613} \\
        \midrule
        Project~A (normal) & 194\,/\,1{,}165 & 237\,/\,1{,}431 & 12{,}900{,}000\,/\,19{,}600{,}000${}^{\dagger}$ \\
        Project~B (normal) & 217\,/\,1{,}142 & 460\,/\,1{,}794 & 29{,}300{,}000\,/\,49{,}500{,}000${}^{\dagger}$ \\
        Project~C (normal) & 217\,/\,1{,}336 & 2{,}414\,/\,504{,}656 & 35{,}300{,}000\,/\,67{,}700{,}000${}^{\dagger}$ \\
        Project~D (normal) & 217\,/\,1{,}651 & 682{,}000\,/\,1{,}580{,}000${}^{\dagger}$ & 50{,}700{,}000\,/\,82{,}700{,}000${}^{\dagger}$ \\
        \bottomrule
    \end{tabular}
\end{table*}

\section{Related Work}
\label{sec:related}

Public documentation from major venues emphasizes end-to-end latency and aggregate
capacity rather than per-symbol, per-core throughput under micro-bursts: Deutsche
B\"orse's T7 breaks latency down by pipeline stage without publishing a single-core
matching ceiling~\cite{db2025dynamics}, and Cboe's latency tooling reports port-to-port
statistics rather than core-saturation behavior~\cite{bats2011latency}. Open-source order
books almost universally adopt the same pattern, a linked list of orders per price level
indexed by a tree or array~\cite{howtohftLOB,quantstackLOB}. The public benchmark spans 246
of them across twenty-plus languages (\S\ref{sec:audited_field}), including the consensus
anchors (Liquibook~\cite{liquibook}, Exchange-core~\cite{exchangecore}, and
QuantCup~1~\cite{quantcupPTME}) and CoinTossX~\cite{jericevich2022cointossx}; none uses
contiguous priority-queue nodes or neighbor-aware tree operations.

\section{Matcher Throughput Is a Positional Good}
\label{subsec:commercial}
Under an order-competition regime, sustainable per-symbol matcher throughput is not merely a systems goal; under such a rule it behaves like a \emph{positional good}: a resource whose value depends on rank relative to competitors, not on its absolute level~\cite{hirsch1976}. The effect is most direct in U.S. equities, where Regulation NMS Rule~611 (the Order Protection Rule) prohibits any trading center from executing an order at a price inferior to a ``protected quotation'' displayed on another venue~\cite{secRule611,secRule611Memo}: a quotation is protected only if it is automated and immediately executable at an exchange's own best bid or offer; the National Best Bid or Offer (NBBO) is the best of those protected quotations across venues. Marketable flow gravitates to the venue displaying the NBBO: brokers route it there under their best-execution obligation, and Rule~611's trade-through protection bars other venues from filling it through that quote. A venue that is not at the NBBO tends to cede that flow to the venue that is. Among the seventeen active U.S. equity venues~\cite{cboeMarketShare} (three operated by Nasdaq, five by NYSE, four by Cboe, and five independents), the venue at the best price captures disproportionate marketable flow, and it is displaced the instant a competitor betters or refreshes the quote first. What decides that capture is therefore not a venue's absolute matcher speed but its speed \emph{relative to} the competing venue's, and the difference bites under load: during a micro-burst, the matcher that clears the burst without queueing keeps its makers' quotes live at the NBBO while a saturating rival's quotes go stale. A matcher that is merely ``fast enough'' in isolation can still be positionally last.

\paragraph{The virtuous loop.} A venue whose matcher clears micro-bursts without queueing (\S\ref{subsec:microbursts}) holds response latency low and stable through the burst. Its market makers cancel and refresh quotes without waiting on a congested match loop, carry less stale-quote risk, and quote tighter and deeper; tighter quotes hold the NBBO more often, marketable flow routes to the venue under brokers' best-execution duty and Rule~611's trade-through protection, and that flow deepens the book and narrows spreads further: a liquidity externality that compounds the lead. The stakes are large: Nasdaq captures the biggest share of on-exchange volume (roughly 49\% of trading in NYSE-listed and 67\% of Nasdaq-listed names, against NYSE's 33\% and 15\%)~\cite{nasdaqMarketData2024}, and shares of that magnitude are exactly what a durable edge in the NBBO competition can move.
\paragraph{The vicious loop.} A venue whose matcher saturates under the same flow inverts every step. The burst queues behind the serial match loop and response latency explodes: a jump of four to five orders of magnitude we measure directly (Table~\ref{tab:burst_latency}). Resting quotes go stale before their makers can withdraw them, and stale quotes are exactly what latency-arbitrage traders pick off, so the venue's residual flow skews toward this \emph{toxic}, adversely-selected order flow while benign flow routes to the faster venue~\cite{budish2015batchauctions,aquilina2021armsrace}. Market makers absorbing those losses quote wider and thinner or withdraw; the venue loses the NBBO, loses the marketable flow that follows it, and its thinner book worsens its relative position further. Aquilina, Budish, and O'Neill estimate that eliminating stale-quote sniping alone would cut effective spreads (investors' cost of liquidity) by up to 17\%~\cite{aquilina2021armsrace}.
Because both loops are self-reinforcing and driven by \emph{relative} matcher speed, a throughput edge does not merely improve a venue's own market: it transfers market share from slower competitors, and the size of the edge sets the size of the transfer. The base effect is measurable even without the NBBO amplifier: TSE's Arrowhead renewal in September 2015 approximately doubled order-processing capacity~\cite{fujitsu2015arrowhead}, after which effective spreads fell 6.45\% (particularly for large-cap, low-tick stocks), marking-the-close manipulation declined 61\%~\cite{kemme2022tse}, high-frequency market makers increased liquidity provision~\cite{ohyama2021hst}, and average daily cash-equity trading value rose 19.4\% year-on-year, lifting trading-services revenue from \textyen48.70 to \textyen52.47 billion, which JPX attributed to ``increases in trading of cash equities and derivatives''~\cite{jpxreport2016}. In fragmented U.S. equities, Rule~611 adds a positional amplifier: because a venue off the NBBO cedes marketable flow to the venue on it, a faster matching core does not merely improve its own market: it can absorb volume from slower competitors, converting a throughput edge into market-share gains.

\paragraph{The mechanism, measured.}\label{sec:burst_latency}
Throughput headroom is not an abstraction; under a burst it is the difference between a
sub-microsecond response and one tens of thousands of times longer (Table~\ref{tab:burst_latency}). At 5~M~msgs/s every engine still answers in well under
a microsecond; past its saturation point the queue grows without bound, and response
latency does not degrade gracefully: it jumps by four to five orders of
magnitude, from a few hundred nanoseconds into the tens of millions. By 12~M~msgs/s, every competitor's P99 response latency has crossed that point, landing
between 20 and 83 \emph{million} nanoseconds, while our engine, whose saturation knee is
near 42~M~msgs/s, holds \textbf{613~ns}, a gap of 32{,}000 to 135{,}000$\times$. These are struct-to-struct figures measured identically for every engine. Flash One's codec-inclusive host-path P99 at a comparable load is 844~ns (13~M~msgs/s, Table~\ref{tab:latency_knee}): the difference from the 613~ns here is the OUCH/ITCH codec work that the struct-to-struct protocol excludes, and both sit well under a microsecond.

The divergence at saturation is not linear, so any throughput margin a venue holds over a
rival becomes, in the moments that matter, a latency gap of many orders of
magnitude, which is what makes the throughput lead decisive, not incremental, under an
order-competition rule.

\paragraph{Quantifying the stakes.}
The preceding data points allow back-of-envelope estimates of the
economic value at stake.  For exchanges: consolidated U.S. equity
volume routinely exceeds \$500~billion in daily notional across
approximately 250 trading days per year, of which 53\% executes
on-exchange~\cite{cboeMarketShare}.  One percentage point of
on-exchange market share therefore represents approximately
\$660~billion in annual matched notional.  At typical net capture
rates of \$0.001--\$0.003 per share, a single percentage point of
market share translates to tens of millions of dollars in annual
transaction-fee revenue. For liquidity providers:
Aquilina, Budish, and O'Neill estimate that stale-quote sniping
extracts approximately \$5~billion per year from liquidity providers
across global equity markets alone (\$2.3--\$8.4~billion across
sensitivity analyses), imposing a roughly 0.5~basis point tax on
trading volume~\cite{aquilina2021armsrace}.

\section{Conclusion}
\label{sec:conclusion}

We drove \textbf{247} matching engines (every common open-source FIFO implementation we could find, deduplicated, and our own) through one C~ABI on an identical, regulator-calibrated, cancel-dominated workload. \textbf{160} reproduce a byte-identical correctness oracle. That oracle is the field's most exacting bug-finder: only \textbf{47} are correct as shipped, and we filed \textbf{181} GitHub issues upstream, \textbf{25} already fixed by their maintainers, none declined. Our engine leads all \textbf{160} by \textbf{${\sim}25$~M/s}, roughly $4\times$ the second best, on worst-case throughput (\S\ref{sec:audited_field}). A single core sustains \textbf{33.2~million} order messages per second (41.08~M/s, on AMD~EPYC processors) worst-case, with a sub-microsecond P99 host-path latency at a 13~M~msgs/s load (about a third of saturation); a 96-core node servicing 10{,}000 symbols sustains \textbf{${\sim}\,640$~million/s}, for scale over $20\times$ the U.S.\ consolidated quote feed's provisioned capacity (\S\ref{sec:instance_throughput}).

The result comes from treating per-symbol matching as a data-structure and cache-locality problem, not a networking one. \emph{Priority-Indicated Nodes} give contiguously addressable slots with $O(1)$ priority resolution and bounded relocation cascades; a neighbor-aware balanced tree turns root-to-leaf search into constant-time splice and graft, followed by a single rebalancing walk. The same structures map directly to FPGA block RAM and priority-resolution circuits, eliminating the cache-hierarchy effects that dominate multi-symbol scaling on CPUs; an FPGA version report is underway. Our evaluation isolates the matching pipeline itself, leaving networking, business logic, and regulatory checks to future work.

\paragraph{Intellectual property.}\label{sec:ip}
The order book architecture described in this paper is protected by a patent
portfolio covering the full stack, from software to hardware accelerator embodiments.  The portfolio
comprises multiple issued U.S. patents and pending international
patents through a PCT application.  All five U.S. applications
received first-action allowance from the USPTO, whose first-action allowance rate in
the relevant art unit is approximately 11\% (per USPTO statistics).  The open license under which this paper
is distributed covers its text and figures only.  It grants no license,
express or implied, to any issued patent or pending application covering
the architecture and its embodiments described herein;
implementing them requires a separate patent license.  For the avoidance of
doubt, this reservation does not extend to use of the benchmark harness: the
harness is MIT-licensed, and running it, including benchmarking your own
engine against the published workloads and reference hashes, requires no
separate license from Flash One.  We further reserve the right to
file continuation, continuation-in-part, divisional, or any similar
additional patent applications on the same subject matter in any
country in the world.  Both the issued claims and this ongoing
prosecution bear directly on any licensee's freedom-to-operate analysis,
and licensing inquiries may be directed to \texttt{jake@flash1.com}.

\section*{Engaging with Flash One}
\label{sec:engage}

The benchmark harness, baselines, and correctness oracle are open today, and the architecture behind our results is ready to evaluate against real workloads. Flash One is a patent IP licensing business, not a matching-engine vendor (we do not offer a full matching-engine product) so what follows is an invitation to license and evaluate the design, not a sales pitch for one. We invite two concrete next steps, matched to how you operate.

\textbf{Exchanges and trading venues.} We license the patented architecture behind these results, delivered as a drop-in binary library for you (or your technology partner) to integrate. The value is structural: under an order-competition rule such as Reg~NMS, matcher throughput behaves like a \emph{positional good} (\S\ref{subsec:commercial}): the venue that stays at the NBBO through micro-bursts captures flow from slower rivals, and the edge compounds. The licensed library drops into the per-symbol matching core (\S\ref{sec:integration}); at scale it sustains \textbf{${\sim}\,640$~million} messages per second on a single 96-core node ({\textasciitilde}\$1{,}630/month), for scale over $20\times$ the U.S.\ consolidated quote feed's provisioned capacity.

\textbf{HFT and market-making firms.} We are looking for strategic partnerships with HFT firms that actively participate in market making. You do not have to take our numbers on faith. Our public benchmark harness, baseline adapters, deterministic workload generator, and byte-identical reference hashes\footnote{\url{https://github.com/flash1-dev/matching-engine-benchmark}} let you reproduce the throughput of the best open-source engines on your own hardware and compare their numbers against ours. To put our own engine on the same footing, we will demonstrate it running that identical workload on identical hardware under a mutual non-disclosure and non-reverse-engineering agreement.

\bibliographystyle{ACM-Reference-Format}
\bibliography{acmart}

\end{document}